\newtheorem{lemma}{\hskip\parindent\bf{Lemma}}
\newtheorem{theorem}{\hskip\parindent\bf{Theorem}}
\newtheorem{remark}{\hskip\parindent\bf{Remark}}
\begin{document}

\title{UAV-Assisted Relaying and Edge Computing: Scheduling and Trajectory Optimization}

\author{Xiaoyan Hu,~\IEEEmembership{Student Member,~IEEE},
Kai-Kit~Wong,~\IEEEmembership{Fellow,~IEEE},
Kun Yang,~\IEEEmembership{Senior Member,~IEEE}, \\
and Zhongbin Zheng\\

\thanks{X. Hu and K.-K. Wong are with the Department of Electronic and Electrical Engineering, University College London, London  WC1E 7JE, UK (Email: \{xiaoyan.hu.16, kai-kit.wong\}@ucl.ac.uk).}
\thanks{K. Yang is with the University of Essex, Colchester CO4 3SQ, UK, and University of Electronic Science and Technology of China, Chengdu 611731, China (E-mail: kunyang@essex.ac.uk).}
\thanks{Z. Zheng is with the East China Institute of Telecommunications, China Academy of Information and Communications Technology, Shanghai 200001, China (Email: ben@ecit.org.cn).}
}
\maketitle
\begin{abstract}
In this paper, we study an unmanned aerial vehicle (UAV)-assisted mobile edge computing (MEC) architecture, in which a UAV roaming around the area may serve as a computing server to help user equipment (UEs) compute their tasks or act as a relay for further offloading their computation tasks to the access point (AP). We aim to minimize the weighted sum energy consumption of the UAV and UEs subject to the task constraints, the information-causality constraints, the bandwidth allocation constraints and the UAV's trajectory constraints. The required optimization is nonconvex, and an alternating optimization algorithm is proposed to jointly optimize the computation resource scheduling, bandwidth allocation, and the UAV's trajectory in an iterative fashion. Numerical results demonstrate that significant performance gain is obtained over conventional methods. Also, the advantages of the proposed algorithm are more prominent when handling computation-intensive latency-critical tasks.
\end{abstract}
\begin{IEEEkeywords}
UAV, mobile edge computing, resource scheduling, bandwidth allocation, trajectory optimization.
\end{IEEEkeywords}

\IEEEpeerreviewmaketitle

\section{Introduction}\label{sec:Introduction}
\subsection{Motivation and Prior Works}\label{sec:PW}
With the popularization of Internet of things (IoT) and the increasingly complex mobile applications, such as virtual and augmented reality, online gaming, automatic driving, etc., the computing demands at user equipment (UEs) are reaching an unprecedented level.
Mobile edge computing (MEC), widely regarded as the technology to help the resource-limited UEs handle computing-intensive latency-critical tasks, has attracted great attention from both the academia and the industry. The standardization organizations and industry associations such as  ETSI and 5GAA have identified several use cases for MEC, from the  intelligent video acceleration and application-aware performance optimization to vehicle-to-everything and massive machine-type communications, etc. \cite{W_Y.Hu15Mobile,5GAA-VISUAL}.

The rationale behind MEC is that UEs' computing tasks can be offloaded and completed at the edge of wireless networks by deploying cloud servers at the access points (APs), so as to liberate the UEs from heavy computing workloads and prolong their battery lifetime~\cite{S_P.Mach17Mobile,S_Y.Mao17ASurve}. Recently, MEC has been widely used in cellular networks, focusing on improving the energy efficiency or reducing the latency of various cellular-based MEC systems \cite{S.Sardellitti.15Joint,J_C.You17Energy,T.Q.Dinh17Offloading,J_X.Hu18Edge,J_L.Pu16D2DFogging,J_H.Sun19Joint,J_C.You2016EnergyEM,J_Y.Mao16Dynamic,J_X.Hu18Wireless,J_F.Wang18Joint}. A  multicell MEC system was studied in \cite{S.Sardellitti.15Joint}, where the total energy consumption was minimized by jointly optimizing the radio and computational resources.
In \cite{J_C.You17Energy}, the resource allocation for minimizing the weighted sum energy consumption of users was addressed with a derived threshold-based optimal policy.
Later in \cite{T.Q.Dinh17Offloading}, the scenario of a UE with multiple tasks was considered, where multiple APs assisted the UE to reduce its total task execution latency and energy consumption.
A two-tier heterogeneous network with the coexistence of edge  and central cloud computing was studied in \cite{J_X.Hu18Edge},  and the cloud selection was optimized to minimize the network's energy consumption.
In \cite{J_L.Pu16D2DFogging}, a device-to-device (D2D) fogging was explored to achieve energy-efficient task completion by sharing computation and communication resources amongst mobile devices. The sum of computation efficiency defined as the calculated data bits divided by the energy consumption was maximized in \cite{J_H.Sun19Joint} with iterative and gradient descent methods.
In addition, the works in \cite{J_C.You2016EnergyEM,J_Y.Mao16Dynamic,J_X.Hu18Wireless,J_F.Wang18Joint} introduced the use of energy harvesting or wireless power transfer (WPT) technologies into the cellular-based MEC systems, which has enabled the UEs to have sustainable energy support to their transmissions and computation, but at the cost of increasing the computational complexity of the systems.

Due to the attractive advantages of unmanned aerial vehicle (UAV) for its easy deployment, flexible movement, and line-of-sight (LoS) connections, and so on, UAV-enabled wireless communication networks have been much researched in recent years \cite{J_zeng16Wireless,J_Zeng17Energy,J_Y.Zeng16Throughput,J_M.M.Azari18Ultra,J_Xu18Uav}.
For instance, an energy-efficient UAV communication was investigated in \cite{J_Zeng17Energy}, in which an UAV flew at a fixed altitude and had the  initial and final locations preset on its trajectory design.
In \cite{J_Y.Zeng16Throughput}, the UAV-enabled mobile relaying systems were studied, where the throughput was maximized by optimizing the transmit power allocation and the UAV's trajectory.
Recently, \cite{J_M.M.Azari18Ultra} proposed a generic framework for the analysis and optimization of the air-to-ground systems, and an optimum altitude for UAV in maximizing the coverage region with a guaranteed minimum outage performance was derived. WPT technology was considered for UAV wireless networks in \cite{J_Xu18Uav}, and the UAV trajectory was optimized to maximize the sum energy or the minimum energy transferred to all the UEs. It was revealed that UAV-enabled WPT can significantly enhance the WPT performance over the traditional WPT system with fixed energy transmitters.

It is a great attempt to leverage the technology of the UAV in  MEC systems, and the performance improvement of the UAV-enabled MEC architecture has been shown to be substantial \cite{J_Jeong18Mobile,J_F.Zhou18Computation,C_X.Cao2018MobileEC}. A UAV-based MEC system was investigated in \cite{J_Jeong18Mobile}, where a moving UAV equipped with a processing server was considered to help UEs compute their offloaded tasks. The total mobile energy consumption was minimized by jointly optimizing the task-bit allocation and the UAV trajectory  using the successive convex approximation (SCA) methods. Later in \cite{J_F.Zhou18Computation}, a wireless powered UAV-enabled MEC system was studied, where the UAV was endowed with an energy transmitter and an MEC server to provide energy as well as MEC services for the UEs. The computation rate maximization problems were addressed under both the partial and binary computation offloading modes by alternating algorithms. In another study \cite{C_X.Cao2018MobileEC}, the UAV acted as a UE rather than an MEC server, which was served by multiple cellular ground base stations  to compute its offloaded tasks. The UAV's  mission completion time was minimized by optimizing the resource allocation and the UAV trajectory through an SCA algorithm.

\subsection{Our Contributions}\label{sec:Contributions}
The aforementioned MEC works concentrate either on the cellular-based MEC networks, where the UEs' tasks are completed by using the computing resources at the APs; or the UAV-enabled MEC architectures by exploiting the computing capability of the UAV processing server. However, for the UEs with seriously degraded links to the AP due to severe blockage, it is impossible to take full use of the computing resources at the AP directly. Besides, due to the size-constrained resource-limited property of the UAVs, it is risky to rely only on the UAVs to assist the UEs for completing their computation-intensive latency-critical tasks. For these reasons, this paper studies a UAV-assisted MEC architecture, where the computing resources at the UAV and the AP are utilized at the same time. In addition, the energy-efficient LoS transmissions of the UAV have been fully exploited since the UAV is not only served as a mobile computing server to help the UEs compute their tasks  but also as a relay to further offload UEs' tasks to the AP for computing. To our best knowledge, this is the first work considering the UAV-assisted  MEC architecture by letting the UAV act as an MEC server and a relay simultaneously.

Our main contributions are summarized as follows:
\begin{itemize}
  \item \textbf{UAV-Assisted  MEC Architecture}---We consider a UAV-assisted  MEC architecture where the cellular-connected UAV is served as a mobile computing server as well as a relay to help the UEs complete their computing tasks or further offload their tasks to the AP for computing. This architecture takes full advantages of the UAV's energy-efficient LoS transmissions, and makes proper use of the computing resources at both the UAV and AP.

  \item \textbf{Problem Formulation with Joint Computation Resource Scheduling, Bandwidth Allocation and UAV's Trajectory Optimization}---Our aim is to minimize the weighted sum energy consumption (WSEC) of the UAV and the UEs subject to the UEs' task constraints, the information-causality constraints, the bandwidth allocation constraints and the UAV's trajectory constraints, by jointly optimizing the computation resource scheduling, the bandwidth allocation and UAV's trajectory iteratively. The formulated problem is complicated and non-convex due to the coupled optimization variables.

  \item \textbf{Alternating Algorithm with Guaranteed Convergence} ---An alternating optimization algorithm is devised to decouple the optimization variables, through which the formulated problem can be properly solved by addressing three subproblems iteratively. Note that the computation resource scheduling parameters, including the offloading/downloading task sizes and the CPU frequencies at each UE and the UAV, as well as the bandwidth allocation parameters are obtained in closed form by leveraging the Lagrange duality method, and that the corresponding Lagrange multipliers associated with the inequality constraints can be obtained using the subgradient method while those associated with the equality constraints can be obtained through bi-section search. The subproblem relating to the UAV's trajectory optimization can be efficiently solved by CVX \cite{M_Grant08CVX} based on the SCA method. Besides, the convergence of the proposed algorithm can be guaranteed, and the required complexity appears to be acceptable.

  \item {\textbf{Considerable Performance Improvement}}---Simulation results are presented to show the optimized trajectories of the UAV under different scenarios and the significant performance enhancement by leveraging the proposed algorithm when compared to existing schemes, such as the one with a preset UAV trajectory, the scheme with task offloading only, the scheme with equal bandwidth allocation, and the local computing scheme without offloading. Moreover, the proposed algorithm is capable of providing more stable performance in adapting to the change in the operating environment, and its advantages will become much more prominent when dealing with the computation-intensive and latency-critical tasks.
\end{itemize}

The rest of this paper is organized as follows. In Section~\ref{sec:system}, we introduce our system model and then formulate the optimization problem. The proposed method that decouples the problem into three subproblems then solving it iteratively is presented in Section~\ref{algorithm_design1}. Section~\ref{sec:simulation} provides the simulation results. Finally, we conclude the paper in Section~\ref{sec:conclusion}.


\section{System Model and Problem Formulation}\label{sec:system}
\begin{figure}[t!]
\centering
\includegraphics[width=2.7in]{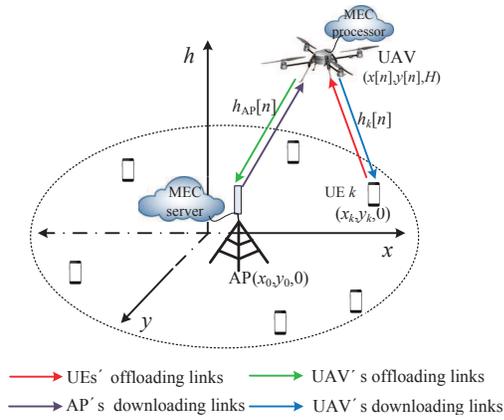}
\caption{An illustration of the UAV-assisted MEC architecture, where the UAV serves as an MEC server to help the ground UEs compute their offloaded tasks or as a possible relay to further forward the offloaded tasks to the AP with more powerful computing resources.}
\label{UAV_relay_MEC}
\end{figure}

As shown in Fig.~\ref{UAV_relay_MEC}, a UAV-assisted  MEC system is considered, which consists of an AP, a cellular-connected UAV, and $K$ ground UEs, all being equipped with a single antenna. The UAV and UEs are all assumed to have an on-board communication circuit and on-board computing processor powered by their embedded battery, while the AP is capable of providing high-speed transmission rate with grid power supply and is endowed with an ultra-high performance processing server. It is also assumed that each UE has a bit-wise-independent computation-intensive task,
and the UAV acts as an assistant to  help the UEs complete their computation tasks by providing both  MEC and relaying services. For providing MEC service, the UAV shares its computing resources with the UEs to help compute their tasks; while for the relaying service, the UAV forwards part of the UEs' offloaded tasks to the AP for computing with the purpose of saving its own energy.

\subsection{Channel Model and Coordinate System}\label{sec:Channel Model}
A three-dimensional (3D) Euclidean coordinate system is adopted, whose coordinates are measured in meters. We assume that the locations of the AP and all the UEs are fixed on the ground with zero altitude, with the location of the AP being $\mathbf{\widetilde{v}}_0=(x_0,y_0,0)$. Let $\mathcal{K}=\{1,\dots,K\}$ denote the set of the UEs, with $\mathbf{\widetilde{v}}_k=(x_k,y_k,0)$ representing the location of UE $k\in\mathcal{K}$. It is assumed that the locations of UEs are known to the UAV for designing its trajectory \cite{J_Zeng17Energy}.
We assume that the UAV flies at a fixed altitude $H>0$ during the task completion time $T$, which corresponds to the minimum altitude that is appropriate to the work terrain and can avoid buildings without the requirement of frequent descending and ascending.

For ease of exposition, the finite task completion time $T$ is discretized into $N$ equal time slots each with a duration of $\tau=T/N$,
where $\tau$ is sufficiently small such that the UAV's location can be assumed to be unchanged during each slot.
The initial and final horizontal locations of the UAV are preset as $\mathbf{u}_\mathrm{I}=(x_\mathrm{I}, y_\mathrm{I})$ and $\mathbf{u}_\mathrm{F}=(x_\mathrm{F}, y_\mathrm{F})$, respectively. Let $\mathcal{N}=\{1,\dots,N\}$ denote the set of the $N$ time slots. At the $n$-th time slot, the UAV's horizontal location is denoted as $\mathbf{u}[n]\equiv\mathbf{u}(n\tau)=(x[n],y[n])$ with $\mathbf{u}[0]=\mathbf{u}_\mathrm{I}$ and $\mathbf{u}[N]=\mathbf{u}_\mathrm{F}$. It is assumed that the UAV flies with a constant speed in each time slot, denoted as $v[n]$, which should satisfy the following maximum speed constraint
\begin{align}\label{eq:speed_n}
v[n]=\frac{\|\mathbf{u}[n]-\mathbf{u}[n-1]\|}{\tau}\leq V_\mathrm{max}, \ n\in \mathcal{N},
\end{align}
where $V_\mathrm{max}$ is the predetermined maximum speed of the UAV, and $V_\mathrm{max}\geq\|\mathbf{u}_\mathrm{F}-\mathbf{u}_\mathrm{I}\|/T$ establishes to make sure that at least one feasible trajectory of the UAV exists.

Similar to \cite{J_Zeng17Energy},
the wireless channels between the UAV and the AP as well as the UEs are assumed to be dominated by LoS links, which is verified by recent field experiments done by Qualcomm \cite{R_Qualcomm17Unmanned}.\footnote{It is of great value to extend our work on the probabilistic LoS and Rician fading channel models when we consider the scenarios where the UAV's flying altitude changes according to the work terrain.} Thus, the channel power gain between the UAV and the AP and between the UAV and  UE $k$ at the time slot $n$ can be, respectively, given by
\begin{align}
&h_{\mathrm{AP}}[n]=h_0d_{\mathrm{AP}}^{-2}=\frac{h_0}{\|\mathbf{u}[n]-\mathbf{v}_0\|^2+H^2}, \ n\in \mathcal{N}, \label{eq:channel_AP} \\
&h_k[n]=h_0d_{k}^{-2}=\frac{h_0}{\|\mathbf{u}[n]-\mathbf{v}_k\|^2+H^2}, \ k\in\mathcal{K}, n\in \mathcal{N}, \label{eq:channel_k}
\end{align}
where $h_0$ is the channel power gain at a reference distance of $d_0=1$m; $d_{\mathrm{AP}}$ and  $d_{k}$ are respectively the horizontal plane distances between the UAV and the AP  as well as the UE $k$ at the $n$-th time slot with $\mathbf{v}_0=(x_0,y_0)$ and $\mathbf{v}_k=(x_k,y_k)$ denoting the horizontal locations of the AP and UE $k$, $k\in\mathcal{K}$.
It is assumed that the channel reciprocity establishes in our considered scenario, and thus the offloading and downloading channels between the UEs and the UAV are identical. In this paper, the direct links between UEs and the AP are assumed to be negligible due to e.g., severe blockage,\footnote{The general case with direct links between the UEs and the AP will be considered as one of our future works.} which means that the UEs cannot directly offload their task-input bits to the AP unless with the assistance of the UAV.
The motivation behind this scenario is based on the fact that it is more important to guarantee the UEs' computation tasks being completed within the given limited time $T$ with as little UEs' energy as possible, than dropping their tasks or letting the UEs compute their takes locally at the cost of exhausting their energy.

\subsection{Computation Task Model and Execution Methods}\label{sec:Task Model}
The computation task of UE $k\in\mathcal{K}$  is denoted as a positive tuple $[I_k, C_k, O_k, T_k]$, where $I_k$ denotes the size (in bits) of the computation task-input data (e.g., the program codes and input parameters), $C_k$ is the amount of required computing resource for computing 1-bit of input data (i.e., the number of CPU cycles required), $O_k\in(0,1)$ is the ratio of task-output data size to that of the task-input data, i.e., the output data size should be $O_kI_k$, and $T_k$ is the maximum tolerable latency with $T_k\leq T, k\in \mathcal{K}$. In this paper, we only consider the case that $T_k=T$ for all $k\in\mathcal{K}$. It should be noted that the UEs' task-input bits are bit-wise independent and can be arbitrarily divided to facilitate parallel trade-offs  between local computing at the UEs and computation offloading to the UAV or further to the AP with the assistance of the UAV. In other words, the UEs can accomplish their computation tasks in a partial offloading fashion \cite{S_Y.Mao17ASurve} with the following three ways. 
\subsubsection{Local Computing at UEs}
Each UE can perform local computing and computation offloading simultaneously since local computing at the UEs does not need radio resources such as bandwidth. To efficiently use the energy for local computing, the UEs leverage a dynamic voltage and frequency scaling (DVFS) technique, and thus the energy consumed for local computing can be adaptively controlled  by adjusting the UEs' CPU frequency during each time slot \cite{J_Zhang13Energy}. The CPU frequency of  UE $k$ during time slot $n$ is denoted as $f_k[n]$ (cycles/second). Thus, the computation bits and energy consumption of UE $k$ during time slot $n$ for local computing can be, respectively, expressed as\footnote{All the
energy consumption in this paper uses the unit of Joule.}
\begin{align}
&L{_k^{\mathrm{local}}}[n]=\tau f_k[n]/C_k, \ k\in\mathcal{K}, n\in \mathcal{N}, \label{eq:Local_size_k}\\
&E{_k^{\mathrm{local}}}[n]=\tau\kappa_k f_k^3[n], \ k\in\mathcal{K}, n\in \mathcal{N}, \label{eq:Local_energy_k}
\end{align}
where $\kappa_k$ is the effective capacitance coefficient of UE $k$ that depends on  its processor's chip architecture.

\subsubsection{Task Offloaded to the UAV for Computing}
The UEs' remaining task-input data should be computed remotely, first by offloading to the UAV, and then one part of the data being computed at the UAV while the other part further offloaded to the AP for computing. In order to avoid interference among the UEs during the offloading process, we adopt the time-division multiple access (TDMA) protocol.
Each slot is further divided into $K$ equal durations $\delta=T/(NK)$, and UE $k$ offloads its task-input data in the $k$-th duration. Let $l_k[n]$ denote the offloaded bits of UE $k$ in its allocated duration at time slot $n$, and thus  the corresponding energy consumption of UE $k$ at slot $n$ for computation offloading can be calculated  as
\begin{align}\label{eq:E_off_k}
E{_k^{\mathrm{off}}}[n]&=\delta p_k[n]    \nonumber\\
&\equiv\frac{\delta N_0}{h_k[n]}\left(2^{\frac{l_k[n]}{\delta B{_k^\mathrm{off}}[n]}}-1\right), \ k\in \mathcal{K}, n\in \mathcal{N},
\end{align}
where $p_k[n]$
is the transmit power of UE $k$ for offloading $l_k[n]$ computation bits to the UAV at time slot $n$, $B{_k^\mathrm{off}}[n]$ is the corresponding allocated bandwidth for UE $k$, and $N_0$ denotes the noise power at the UAV.\footnote{Without loss of generality, we assume that the noise power at any node in the system is considered the same as $N_0$.}

Assume that the UAV also adopts the DVFS technique to improve its energy efficiency for computing, and its adjustable CPU frequency in the $k$-th duration of slot $n$ for computing UE $k$'s offloaded task is denoted as $f_{\mathrm{U},k}[n]$. Hence, the completed computation bits and the energy consumption of the UAV for computing UE $k$'s task at slot $n$ can be, respectively, given by
\begin{align}
L_{\mathrm{U},k}[n]&=\delta f_{\mathrm{U},k}[n]/C_k, \ k\in\mathcal{K}, n\in \mathcal{N}, \label{eq:UAV_size_k}\\
E_{\mathrm{U},k}[n]&=\delta\kappa_{\mathrm{U}} f_{\mathrm{U},k}^3[n], \ k\in\mathcal{K}, n\in \mathcal{N}, \label{eq:UAV_energy_k}
\end{align}
where $\kappa_\mathrm{U}$ is the effective capacitance coefficient of the UAV. Note that computing $L_{\mathrm{U},k}[n]$ bits of UE $k$'s task-input data will produce $O_kL_{\mathrm{U},k}[n]$ bits of task-output data, which should be downloaded from the UAV to the UE $k$ later.

\subsubsection{Task Offloaded to the AP for Computing}
Part of the UEs' offloaded task-input data at the UAV will be offloaded to the AP's processing server for computing. To better distinguish the offloading signals from different UEs, the TDMA protocol with $K$ equal time division ($\delta=T/(NK)$) is also adopted in this case. Let $l{_{\mathrm{U},k}^{\mathrm{off}}}[n]$ denote the number of UE $k$'s task-input bits being offloaded from the UAV to the AP at time slot $n$. Thus, the corresponding energy consumption of the  UAV for offloading UE $k$'s task at slot $n$ can be calculated as
\begin{align}\label{eq:E_UAV_off_k}
E{_{\mathrm{U},k}^{\mathrm{off}}}[n]&=\delta p{_{\mathrm{U},k}^{\mathrm{off}}}[n]  \nonumber\\
&\equiv\frac{\delta N_0}{h_{\mathrm{AP}}[n]}\Bigg(2^{\frac{l{_{\mathrm{U},k}^{\mathrm{off}}}[n]}{\delta B{_{\mathrm{U},k}^\mathrm{off}}[n]}}-1\Bigg), \ k\in \mathcal{K}, n\in \mathcal{N},
\end{align}
where $p{_{\mathrm{U},k}^{\mathrm{off}}}[n]$
and $B{_{\mathrm{U},k}^\mathrm{off}}[n]$ are respectively the transmit power  and  the allocated bandwidth of the UAV for offloading UE $k$'s tasks to the AP at time slot $n$. After computing the $l{_{\mathrm{U},k}^{\mathrm{off}}}[n]$ input bits at the AP, $O_kl{_{\mathrm{U},k}^{\mathrm{off}}}[n]$  bits of computation results for UE $k$ will be generated. As the AP is integrated with an ultra-high-performance processing server, the computing time is negligible. The AP will send the computation results back to the UAV in the TDMA manner using a separate bandwidth. Since the AP is supplied with grid power and can support ultra-high transmission rate, the download transmission time from the AP to the UAV is also assumed negligible.\footnote{Once the AP receives the forwarded $l{_{\mathrm{U},k}^{\mathrm{off}}}[n]$ bits input data from the UAV in the $k$-th duration of the $n$-th time slot, it will immediately decode, compute the data, and then send the induced $O_kl{_{\mathrm{U},k}^{\mathrm{off}}}[n]$  bits of output data back to the UAV, all with ultra-low latency that is negligible compared with the length of each duration $\delta$, which means that the UAV can receive the task-output data from the AP in the same duration of its offloading process.}

For the later two offloading methods, the generated computation results at the UAV (including the results from UAV's computing and received from the AP) will then be downloaded back to the corresponding UEs. It is assumed that the UAV is equipped with a data buffer with sufficiently large size, and it is capable of storing each UE's offloaded data and the corresponding computation results separately. Besides, we assume that the UAV operates in a frequency-division-duplex (FDD) mode in each UE's operation duration $\delta$ with separate bandwidths allocated for task reception from UEs ($\{B{_k^\mathrm{off}}[n]\}$), task offloading transmission to the AP ($\{B{_{\mathrm{U},k}^{\mathrm{off}}}[n]\}$), and task results downloading transmission to the UEs (\{$B{_{\mathrm{U},k}^{\mathrm{down}}}[n]$\}), with a total bandwidth $B$ satisfying the constraint
\begin{equation}
B{_k^\mathrm{off}}[n]+B{_{\mathrm{U},k}^{\mathrm{off}}}[n]+B{_{\mathrm{U},k}^{\mathrm{down}}}[n]=B,\ k\in\mathcal{K}, n\in\mathcal{N}.
\end{equation}

The UEs' computation results are subsequently transmitted by the UAV using TDMA similar to the UEs' offloading process, each with an equal duration $\delta$ in each time slot. Let $l{_{\mathrm{U},k}^{\mathrm{down}}}[n]$ denote the bits of task-output data being downloaded from the UAV to UE $k$  at time slot $n$. Hence, the corresponding energy consumption of the UAV can be calculated as
\begin{align}\label{eq:E_UAV_down_k}
E{_{\mathrm{U},k}^{\mathrm{down}}}[n]&=\delta p{_{\mathrm{U},k}^{\mathrm{down}}}[n] \nonumber\\
&\equiv\frac{\delta N_0}{h_{k}[n]}\Bigg(2^{\frac{l{_{\mathrm{U},k}^{\mathrm{down}}}[n]}{\delta B{_{\mathrm{U},k}^\mathrm{down}}[n]}}-1\Bigg), \ k\in \mathcal{K}, n\in \mathcal{N},
\end{align}
where $p{_{\mathrm{U},k}^{\mathrm{down}}}[n]$
is the transmit power of the UAV for downloading UE $k$'s task-output data at time slot $n$.

Note that at each time slot $n$, the UAV can only compute or forward the task-input data that has already been received from the UEs. By assuming that the processing delay, e.g., the delay for decoding and computing preparation, at the UAV is one time slot, then we have the following information-causality constraint:
\begin{align}\label{eq:UE_UAV_Causality}
\sum\limits_{i{\rm{ = }}2}^n \left(\frac{\delta f_{\mathrm{U},k}[i]}{C_k}+l{_{\mathrm{U},k}^{\mathrm{off}}}[i]\right) \leq \sum\limits_{i{\rm{ = }}1}^{n-1} l{_{k}}[i],
\end{align}
for $ n\in\mathcal{N}_2$ and $k\in\mathcal{K}$ where $\mathcal{N}_2=\{2,\dots,N-1\}$. Similarly, at each time slot $n$, the UAV can only transmit the task-output data corresponding to the task-input data that has already been computed at the UAV or offloaded for computing at the AP. Thus, we have another information-causality constraint:
\begin{align}\label{eq:UAV_AP_Causality}
\sum\limits_{i{\rm{ = }}3}^{n} l{_{\mathrm{U},k}^{\mathrm{down}}}[i]\leq O_k\sum\limits_{i{\rm{ = }}2}^{n-1} \left(\frac{\delta f_{\mathrm{U},k}[i]}{C_k} +l{_{\mathrm{U},k}^{\mathrm{off}}}[i]\right),
\end{align}
for $ n\in\mathcal{N}_3$ and $k\in\mathcal{K}$ where $\mathcal{N}_3=\{3,\dots,N\}$.
It is clear that the UEs should not offload at the last two slots, while the UAV should not compute or forward the received input data of UEs' at the first and the last slots as well as  not transmit the output data to the UEs in the first two slots. Hence, we have  $l{_{k}}[N-1]=l{_{k}}[N]=0$, $f_{\mathrm{U},k}[1]=f_{\mathrm{U},k}[N]=0$, $l{_{\mathrm{U},k}^{\mathrm{off}}}[1]=l{_{\mathrm{U},k}^{\mathrm{off}}}[N]=0$, and $l{_{\mathrm{U},k}^{\mathrm{down}}}[1]=l{_{\mathrm{U},k}^{\mathrm{down}}}[2]=0$.

\subsection{Problem Formulation}\label{sec:problem}
Considering the fact that the traditional battery-based UEs and  UAVs are usually power-limited, one major problem the UAV-assisted MEC system faces will be energy. Hence, in this paper, we try to minimize the WSEC of the UAV as well as all the UEs during the whole task completion time $T$. In the previous subsection, we have obtained the energy consumption of the UEs and the UAV for task offloading/downloading and computation. In fact, the energy consumption for UAV's propulsion is also considerable which is greatly affected by the UAV's trajectory, and hence should be taken into account.
With the assumption that the time slot duration $\tau$ is  sufficiently small, the UAV's flying during each slot can be regarded as  straight-and-level flight with constant speed $v[n]$.
Taking a fixed-wing UAV as an example \cite{J_Zeng17Energy,J_Y.Zeng19Accessing}, its propulsion energy consumption
at time slot $n$ can be expressed as
\begin{align}\label{eq:E_fly}
E{_\mathrm{U}^{\mathrm{fly}}}[n]=\tau\bigg(\theta_1v^3[n]+\frac{\theta_2}{v[n]}\bigg), \ n\in \mathcal{N},
\end{align}
where $\theta_1$ and $\theta_2$ are two parameters related to the UAV's weight, wing area, wing span efficiency, and air density, etc.
Combining with the above analysis, we obtain the total energy consumption of UE $k$ and the UAV in each time slot $n$ as
\begin{align}
E_{k}[n]&=E{_k^{\mathrm{local}}}[n]+E{_k^{\mathrm{off}}}[n], \ k\in\mathcal{K}, n\in \mathcal{N},\label{eq:E_k_n}\\
E_{\mathrm{U}}[n]&=\sum\limits_{k{\rm{ = }}1}^K \Big(E_{\mathrm{U},k}[n]+E{_{\mathrm{U},k}^{\mathrm{off}}}[n]+\Big. \nonumber\\
&~~~~~~~~~~\Big.E{_{\mathrm{U},k}^{\mathrm{down}}}[n]\Big)+E{_\mathrm{U}^{\mathrm{fly}}}[n], \  n\in \mathcal{N}.\label{eq:E_UAV_k_N}
\end{align}

In our considered scenario, the UEs' CPU computing frequencies $\{f_k[n]\}$, their offloading task-input bits $\{l{_{k}}[n]\}$ and the corresponding allocated bandwidth $\{B{_k^\mathrm{off}}[n]\}$; the UAV's CPU computing frequencies $\{f_{\mathrm{U},k}[n]\}$, its forwarding (further offloading) task-input bits $\{l{_{\mathrm{U},k}^{\mathrm{off}}}[n]\}$ and downloading task-output bits $\{l{_{\mathrm{U},k}^{\mathrm{down}}}[n]\}$ as well as the corresponding allocated bandwidths $\{B{_{\mathrm{U},k}^{\mathrm{off}}}[n]\}$, $\{B{_{\mathrm{U},k}^{\mathrm{down}}}[n]\}$ for different UEs; along with the UAV's trajectory $\{\mathbf{u}[n]\}$ will be optimized to minimize the WSEC.
To this end, the WSEC minimization problem can be formulated as problem (P1) given below
\vspace{-3mm}

{\small{
\begin{subeqnarray}\label{eq:WSECM1}
&&\hspace{-8mm}\underset{\mathbf{z},\mathbf{B},\mathbf{u}}{\min}~~
\sum\limits_{n{\rm{ = }}1}^N\left(w_\mathrm{U}E_{\mathrm{U}}[n]+\sum\limits_{k{\rm{ = }}1}^Kw_kE_k[n]\right)\slabel{eq:WSECM1_0}\\
&&\hspace{-8mm} \mathrm{s.t.}~~
\sum\limits_{i{\rm{ = }}2}^n \left(\frac{\delta f_{\mathrm{U},k}[i]}{C_k}+l{_{\mathrm{U},k}^{\mathrm{off}}}[i]\right)\leq \sum\limits_{i{\rm{ = }}1}^{n-1} l{_{k}}[i],\ \forall n\in\mathcal{N}_2,\ \forall k\in\mathcal{K}, \slabel{eq:WSECM1_1}\\
&&\hspace{-8mm} \sum\limits_{i{\rm{ = }}3}^n l{_{\mathrm{U},k}^{\mathrm{down}}}[i]\leq O_k\sum\limits_{i{\rm{ = }}2}^{n-1} \left(\frac{\delta f_{\mathrm{U},k}[i]}{C_k} +l{_{\mathrm{U},k}^{\mathrm{off}}}[i]\right), \forall n\in\mathcal{N}_3, \forall k\in\mathcal{K}, \quad\quad\slabel{eq:WSECM1_2}\\
&&\hspace{-8mm} \sum\limits_{n{\rm{ = }}2}^{N-1} \left(\frac{\delta f_{\mathrm{U},k}[n]}{C_k}+l{_{\mathrm{U},k}^{\mathrm{off}}}[n]\right)= \sum\limits_{n{\rm{ = }}1}^{N-2} l{_{k}}[n],\ \forall k\in\mathcal{K}, \slabel{eq:WSECM1_4}\\
&&\hspace{-8mm} \sum\limits_{n{\rm{ = }}3}^N l{_{\mathrm{U},k}^{\mathrm{down}}}[n]= O_k\sum\limits_{n{\rm{ = }}2}^{N-1} \left(\frac{\delta f_{\mathrm{U},k}[n]}{C_k} +l{_{\mathrm{U},k}^{\mathrm{off}}}[n]\right), \ \forall k\in\mathcal{K}, \slabel{eq:WSECM1_5}\\
&&\hspace{-8mm}\sum\limits_{n{\rm{ = }}1}^{N}\frac{\tau}{C_k}f_{k}[n]+\sum\limits_{n{\rm{ = }}1}^{N-2} l{_{k}}[n]=I_k,\ \forall k\in\mathcal{K}, \slabel{eq:WSECM1_3}\\
&&\hspace{-8mm} B{_k^\mathrm{off}}[n]+B{_{\mathrm{U},k}^{\mathrm{off}}}[n]+B{_{\mathrm{U},k}^{\mathrm{down}}}[n]=B, \ \forall n\in\mathcal{N}, \ \forall k\in\mathcal{K}, \slabel{eq:WSECM1_12}\\
&&\hspace{-8mm} f_k[n]\geq0, \ \forall n\in\mathcal{N},\ \forall k\in\mathcal{K}, \slabel{eq:WSECM1_6}\\
&&\hspace{-8mm} l_k[N-1]=l_k[N]=0, \ l_k[n]\geq0,\ \forall n\in\mathcal{N}_1,\  \forall k\in\mathcal{K}, \slabel{eq:WSECM1_11}\\
&&\hspace{-8mm} f_{\mathrm{U},k}[1]=f_{\mathrm{U},k}[N]=0,\ f_{\mathrm{U},k}[n]\geq0, \ \forall n\in\mathcal{N}_2, \ \forall  k\in\mathcal{K}, \slabel{eq:WSECM1_14}\\ 
&&\hspace{-8mm} l{_{\mathrm{U},k}^{\mathrm{off}}}[1]=l{_{\mathrm{U},k}^{\mathrm{off}}}[N]=0, \ l{_{\mathrm{U},k}^{\mathrm{off}}}[n]\geq0, \ \forall n\in\mathcal{N}_2,\ \forall k\in\mathcal{K}, \slabel{eq:WSECM1_7}\\ 
&&\hspace{-8mm} l{_{\mathrm{U},k}^{\mathrm{down}}}[1]=l{_{\mathrm{U},k}^{\mathrm{down}}}[2]=0,\ l{_{\mathrm{U},k}^{\mathrm{down}}}[n]\geq0, \ \forall n\in\mathcal{N}_3,\ \forall k\in\mathcal{K}, \slabel{eq:WSECM1_8}\\
&&\hspace{-8mm} B{_k^\mathrm{off}}[N-1]=B{_k^\mathrm{off}}[N]=0,\ B{_k^\mathrm{off}}[n]\geq0,\ \forall n\in\mathcal{N}_1,\ \forall k\in\mathcal{K},\slabel{eq:WSECM1_13}\\
&&\hspace{-8mm} B{_{\mathrm{U},k}^{\mathrm{off}}}[1]=B{_{\mathrm{U},k}^{\mathrm{off}}}[N]=0,\ B{_{\mathrm{U},k}^{\mathrm{off}}}[n]\geq0,
\ \forall n\in\mathcal{N}_2,\ \forall k\in\mathcal{K},\slabel{eq:WSECM1_15}\\ 
&&\hspace{-8mm} B{_{\mathrm{U},k}^{\mathrm{down}}}[1]=B{_{\mathrm{U},k}^{\mathrm{down}}}[2]=0,\ B{_{\mathrm{U},k}^{\mathrm{down}}}[n]\geq0, \forall n\in\mathcal{N}_3,\ \forall k\in\mathcal{K}, \slabel{eq:WSECM1_16}\\
&&\hspace{-8mm} \mathbf{u}[0]=\mathbf{u}_\mathrm{I}, \ \mathbf{u}[N]=\mathbf{u}_\mathrm{F}, \slabel{eq:WSECM1_9}\\
&&\hspace{-8mm} \|\mathbf{u}[n]-\mathbf{u}[n-1]\|\leq V_\mathrm{max}\tau,\ \forall n\in\mathcal{N},  \quad \slabel{eq:WSECM1_10}
\end{subeqnarray}
}}
\hspace{-2mm}where $\mathbf{z}\triangleq\{\mathbf{z}_k[n]\}_{k\in\mathcal{K},n\in\mathcal{N}}$ and $\mathbf{B}\triangleq\{\mathbf{B}_k[n]\}_{k\in\mathcal{K},n\in\mathcal{N}}$ with $\mathbf{z}_k[n]\triangleq\{ f_{k}[n],l{_{k}}[n], f_{\mathrm{U},k}[n], l{_{\mathrm{U},k}^{\mathrm{off}}}[n], l{_{\mathrm{U},k}^{\mathrm{down}}}[n]\}$ and $\mathbf{B}_k[n]\triangleq\{B{_k^\mathrm{off}}[n], B{_{\mathrm{U},k}^{\mathrm{off}}}[n], B{_{\mathrm{U},k}^{\mathrm{down}}}[n] \}$, respectively, denote the sets of the computational resource scheduling variables and the bandwidth allocation variables for UE $k$ in time slot $n$, $\mathbf{u}\triangleq\{\mathbf{u}[n]\}_{n\in\mathcal{N}}$ denotes the set of the UAV's horizontal locations for all the slots, i.e., the trajectory of the UAV, and $\mathcal{N}_1=\{1,\dots,N-2\}$.
In (P1), \eqref{eq:WSECM1_0} is the objective function for minimizing the WSEC where $w_\mathrm{U}$ and $\{w_k\}_{k\in\mathcal{K}}$ represent the weights of the UAV and UEs, respectively, which trade-offs between the UAV and UEs, and the priority/fairness among the UEs. Also, \eqref{eq:WSECM1_1} and \eqref{eq:WSECM1_2} are the two information-causality constraints, while  \eqref{eq:WSECM1_4}--\eqref{eq:WSECM1_3} are the UEs' computation task constraints to make sure that all the UEs' computation task-input data have been computed and the task-output data have been received. The bandwidth constraints are in \eqref{eq:WSECM1_12}, while \eqref{eq:WSECM1_6}--\eqref{eq:WSECM1_16} ensure the non-negativeness of the optimization variables. \eqref{eq:WSECM1_9} and \eqref{eq:WSECM1_10}  specify the UAV's initial and final horizontal locations, and its maximum speed constraints.

\section{Algorithm Design}\label{algorithm_design1}
The problem (P1) is a complicated non-convex optimization problem because of the non-convex objective function where non-linear couplings exist among the variables $l{_{k}}[n]$ and $B{_k^\mathrm{off}}[n]$, $l{_{\mathrm{U},k}^{\mathrm{off}}}[n]$ and $B{_{\mathrm{U},k}^{\mathrm{off}}}[n]$, $l{_{\mathrm{U},k}^{\mathrm{down}}}[n]$ and $B{_{\mathrm{U},k}^{\mathrm{down}}}[n]$ for $k\in \mathcal{K}, n \in\mathcal{N}$, and these variables are also strongly coupled with the trajectory of the UAV, i.e., $\mathbf{u}[n]$. To address these issues, we propose a three-step alternating optimization algorithm to solve the problem. In the first step, the computation resource scheduling variables in $\mathbf{z}$ are optimized by solving the problem with  given UAV trajectory $\mathbf{u}$ and bandwidth allocation $\mathbf{B}$; and then in the second step, the bandwidth allocation variables in $\mathbf{B}$ will be optimized with the same given UAV trajectory $\mathbf{u}$ and the optimized  $\mathbf{z}$ obtained in the first step; and finally in the third step, we focus on designing the UAV trajectory $\mathbf{u}$ with the optimized variables $\mathbf{z}$ and $\mathbf{B}$. The details for the three-step algorithm are presented as follows.

\subsection{Computation Resource Scheduling with Fixed UAV Trajectory and Bandwidth Allocation}\label{computation_resource_scheduling}
A sub-problem of (P1) is the computation resource scheduling problem (P1.1), where the UAV's trajectory $\mathbf{u}$ and bandwidth allocation $\mathbf{B}$ are given as fixed. In this case, the time-dependent channels  $\{h_{\mathrm{AP}}[n]\}_{n\in\mathcal{N}}$ and $\{h_k[n]\}_{k\in\mathcal{K},n\in\mathcal{N}}$ defined in \eqref{eq:channel_AP} and \eqref{eq:channel_k} are also known. Besides, the non-linear couplings among the offloading/downloading task-input/task-output \mbox{bits} ($l{_{k}}[n], l{_{\mathrm{U},k}^{\mathrm{off}}}[n], l{_{\mathrm{U},k}^{\mathrm{down}}}[n]$) with their corresponding allocated bandwidths ($B{_k^\mathrm{off}}[n], B{_{\mathrm{U},k}^{\mathrm{off}}}[n], B{_{\mathrm{U},k}^{\mathrm{down}}}[n]$) no longer exist. The resource scheduling problem (P1.1) is convex with a convex objective function and convex constraints, which is expressed as
\begin{subeqnarray}\label{eq:RS1}
 ({\rm P1.1}):
 \underset{\mathbf{z}}{\min} &&\hspace{-4mm}
  \sum\limits_{n{\rm{ = }}1}^N\left(w_\mathrm{U}E{_{\mathrm{U}}^{(1)}}[n]+\sum\limits_{k{\rm{ = }}1}^Kw_kE_k[n]\right) \quad\quad\slabel{eq:RS1_0}\\
\mathrm{s.t.}
&&\hspace{-4mm} \eqref{eq:WSECM1_1}-\eqref{eq:WSECM1_3}, \ \eqref{eq:WSECM1_6}-\eqref{eq:WSECM1_8}, \slabel{eq:RS1_1} 
\end{subeqnarray}
where $E{_{\mathrm{U}}^{(1)}}[n]=\sum\limits_{k{\rm{ = }}1}^K \Big(E_{\mathrm{U},k}[n]+E{_{\mathrm{U},k}^{\mathrm{off}}}[n]+
E{_{\mathrm{U},k}^{\mathrm{down}}}[n]\Big)$.
In order to gain more insights of the solution, we leverage the Lagrange method  \cite{B_Boyd04Convex} to solve problem (P1.1), and the optimal solution of problem (P1.1) is given in the following theorem.

\begin{theorem}\label{theorem_1}
The optimal solution of problem (P1.1) related to UE $k\in\mathcal{K}$ is given in \eqref{eq:f_k}--\eqref{eq:l_down_Uk} at the top of the next page,
\begin{figure*}
\begin{align}
f{_{k}^*}[n]&=\sqrt{\frac{[\beta{_k^*}]^+}{3C_kw_k\kappa_k}}, \  n\in \mathcal{N}, \label{eq:f_k}\\
l{_{k}^*}[n]&=\left\{
\begin{aligned}
&\delta B{_k^{\mathrm{off}}}[n]\Bigg[\varphi_k[n]+\log_2\bigg[\sum\limits_{i{\rm{ = }}n+1}^{N-1}\lambda{_{k,i}^*}+\beta{_k^*}-\eta{_k^*} \bigg]^+\Bigg]^+, \  n\in \mathcal{N}_1, \\
&0, \quad n=N-1~\mathrm{or}~N, \\
\end{aligned}\right. \label{eq:l_k}\\
f{_{\mathrm{U},k}^*}[n]&=\left\{
\begin{aligned}
&\sqrt{\frac{ \bigg[\eta{_k^*}-O{_k}\rho{_k^*}+O_k\sum\limits_{i{\rm{ = }}{n+1}}^{N}\mu{_{k,i}^*}-\sum\limits_{i{\rm{ = }}n}^{N-1}\lambda{_{k,i}^*}
 \bigg]^+}{3C_kw_\mathrm{U}\kappa_\mathrm{U}}}, \ n\in \mathcal{N}_2, \\
&0, \quad n=1~\mathrm{or}~N, \\
\end{aligned}\right. \label{eq:f_Uk}\\
l{_{\mathrm{U},k}^{\mathrm{off}*}}[n]&=\left\{
\begin{aligned}
&\delta B{_{\mathrm{U},k}^{\mathrm{off}}}[n]\Bigg[\varphi{_{\mathrm{U},k}^{\mathrm{off}}}[n]+ \log_2\bigg[\eta{_k^*}-O{_k}\rho{_k^*}+O_k\sum\limits_{i{\rm{ = }}{n+1}}^{N}\mu{_{k,i}^*}-\sum\limits_{i{\rm{ = }}n}^{N-1}\lambda{_{k,i}^*} \bigg]^+\Bigg]^+, \ n\in \mathcal{N}_2,  \\
&0, \quad n=1~\mathrm{or}~N, \\
\end{aligned}\right. \label{eq:l_off_Uk}\\
l{_{\mathrm{U},k}^{\mathrm{down}*}}[n]&=\left\{
\begin{aligned}
&\delta B{_{\mathrm{U},k}^{\mathrm{down}}}[n]\Bigg[\varphi{_{\mathrm{U},k}^{\mathrm{down}}}[n]+\log_2\bigg[\rho{_k^*}-
\sum\limits_{i{\rm{ = }}n}^{N} \mu{_{k,i}^*} \bigg]^+\Bigg]^+, \ n\in \mathcal{N}_3,  \\
&0, \quad n=1~\mathrm{or}~2, \\
\end{aligned}\right. \label{eq:l_down_Uk}
\end{align}
\end{figure*}
where
\begin{align}
\varphi_k[n]&=\log_2\frac{B{_k^{\mathrm{off}}}[n]h_k[n]}{w_kN_0\ln2},\ n\in\mathcal{N}_1, \label{eq:phi_k} \\
\varphi{_{\mathrm{U},k}^{\mathrm{off}}}[n]&=\log_2\frac{B{_{\mathrm{U},k}
^{\mathrm{off}}}[n]h_{\mathrm{AP}}[n]}{w_\mathrm{U}N_0\ln2},\ n\in\mathcal{N}_2, \label{eq:phi_off_Uk}\\
\varphi{_{\mathrm{U},k}^{\mathrm{down}}}[n]&=\log_2\frac{B{_{\mathrm{U},k}
^{\mathrm{down}}}[n]h_{k}[n]}{w_\mathrm{U}N_0\ln2},\ n\in\mathcal{N}_3, \label{eq:phi_down_Uk}
\end{align}
are denoted as the offloading/downloading priority indicators for the UEs in each given slot. Also, $\lambda{_{k,n}^*}\geq0$ and  $\mu{_{k,n}^*}\geq0$ for $k\in\mathcal{K}, n\in\mathcal{N}$ are respectively the optimal Lagrange multipliers (dual variables) associated with the inequality constraints \eqref{eq:WSECM1_1} and \eqref{eq:WSECM1_2} in problem (P1.1) (or P1), while  $\eta_k^*$, $\rho^*_k$ and $\beta^*_k$  are respectively the optimal Lagrange multipliers  associated with the equality constraints \eqref{eq:WSECM1_4}--\eqref{eq:WSECM1_3} for $k\in\mathcal{K}$.
\end{theorem}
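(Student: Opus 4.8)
The plan is to solve (P1.1) through its Karush--Kuhn--Tucker (KKT) system. As already observed, (P1.1) has a convex (separable cubic-plus-exponential) objective and affine constraints; assuming a Slater point exists, strong duality holds and the KKT conditions are both necessary and sufficient for optimality, so it suffices to exhibit primal variables and multipliers satisfying them. First I would form the Lagrangian, attaching $\lambda_{k,n}\ge 0$ to the causality inequalities \eqref{eq:WSECM1_1}, $\mu_{k,n}\ge 0$ to \eqref{eq:WSECM1_2}, and the free multipliers $\eta_k$, $\rho_k$, $\beta_k$ to the equalities \eqref{eq:WSECM1_4}, \eqref{eq:WSECM1_5}, \eqref{eq:WSECM1_3}, together with non-negativity multipliers for \eqref{eq:WSECM1_6}--\eqref{eq:WSECM1_8}. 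Since (P1.1) contains no coupling across different UEs, the Lagrangian separates over $k$ and it is enough to work with a single UE.

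The core of the proof is the stationarity condition for each primal variable. For $f_k[n]$ the only relevant terms are the cubic energy $w_k\tau\kappa_k f_k^3[n]$ and the coefficient $\tfrac{\tau}{C_k}$ it carries in \eqref{eq:WSECM1_3}; setting the derivative to zero gives $3w_k\kappa_k C_k f_k^2[n]=\beta_k$, and complementary slackness for $f_k[n]\ge 0$ produces the projection $[\beta_k^*]^+$ in \eqref{eq:f_k}. For the offloaded bits $l_k[n]$ I would collect the derivative of the exponential $E_k^{\mathrm{off}}[n]$, namely $w_k\tfrac{N_0\ln 2}{h_k[n]B_k^{\mathrm{off}}[n]}2^{l_k[n]/(\delta B_k^{\mathrm{off}}[n])}$, together with the constraint gradients: because $l_k[n]$ sits on the cumulative right-hand side of \eqref{eq:WSECM1_1} it couples to $\lambda_{k,i}$ for every index $i>n$, contributing $\sum_{i=n+1}^{N-1}\lambda_{k,i}$, while \eqref{eq:WSECM1_3} and \eqref{eq:WSECM1_4} add the single terms $\beta_k$ and $\eta_k$ with signs fixed by the chosen convention for the free multipliers, assembling into the bracket $\sum_{i=n+1}^{N-1}\lambda_{k,i}+\beta_k-\eta_k$. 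Equating the marginal energy cost to this bracket and taking logarithms yields the nested-projection form of \eqref{eq:l_k}, with $\varphi_k[n]$ of \eqref{eq:phi_k} absorbing the channel and bandwidth constants. Applying the identical recipe to $f_{\mathrm{U},k}[n]$, $l_{\mathrm{U},k}^{\mathrm{off}}[n]$ and $l_{\mathrm{U},k}^{\mathrm{down}}[n]$ gives \eqref{eq:f_Uk}--\eqref{eq:l_down_Uk}; here each UAV variable touches several constraints at once---$f_{\mathrm{U},k}[n]$ and $l_{\mathrm{U},k}^{\mathrm{off}}[n]$ appear on the left of \eqref{eq:WSECM1_1}, on the right of \eqref{eq:WSECM1_2}, and in both equalities \eqref{eq:WSECM1_4}--\eqref{eq:WSECM1_5}, whereas $l_{\mathrm{U},k}^{\mathrm{down}}[n]$ appears on the left of \eqref{eq:WSECM1_2} and in \eqref{eq:WSECM1_5}---which is precisely why their closed forms carry the combination $\eta_k-O_k\rho_k+O_k\sum_i\mu_{k,i}-\sum_i\lambda_{k,i}$.

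The main obstacle will be the careful bookkeeping of exactly which multipliers couple to a given variable and over which summation ranges, since this fixes the index limits $\sum_{i=n+1}^{N-1}$, $\sum_{i=n}^{N-1}$, $\sum_{i=n+1}^{N}$ and $\sum_{i=n}^{N}$ that appear in the statement. These follow from the cumulative (telescoping) structure of the causality constraints: a variable placed at slot $n$ on a cumulative left-hand side enters every constraint indexed by $i\ge n$, while on a right-hand side it enters every constraint indexed by $i>n$, and each equality contributes a single constant term. I would also verify the two layers of projection in \eqref{eq:l_k}--\eqref{eq:l_down_Uk}: the inner $[\cdot]^+$ ensures a strictly positive argument for the logarithm (equivalently a nonnegative radicand), which is forced because the corresponding marginal energy cost is nonnegative, while the outer $[\cdot]^+$ enforces the primal non-negativity \eqref{eq:WSECM1_6}--\eqref{eq:WSECM1_8} via complementary slackness. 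Finally I would record the forced-zero boundary values dictated by \eqref{eq:WSECM1_11}, \eqref{eq:WSECM1_14}, \eqref{eq:WSECM1_7} and \eqref{eq:WSECM1_8}, which account for the slot-dependent branches (the $n=N-1,N$ and $n=1,N$ and $n=1,2$ cases) in \eqref{eq:l_k}--\eqref{eq:l_down_Uk}.
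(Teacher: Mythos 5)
Your proposal follows essentially the same route as the paper's Appendix A: form the (partial) Lagrangian of (P1.1) with multipliers $\lambda_{k,n},\mu_{k,n}$ on the causality inequalities and $\eta_k,\rho_k,\beta_k$ on the equalities, exploit separability across UEs, and read off the closed forms \eqref{eq:f_k}--\eqref{eq:l_down_Uk} from the KKT stationarity conditions, with the $[\cdot]^+$ projections handling non-negativity and the boundary slots forced to zero by \eqref{eq:WSECM1_11}--\eqref{eq:WSECM1_8}. Your multiplier bookkeeping (the summation ranges $\sum_{i=n+1}^{N-1}\lambda_{k,i}$, $\sum_{i=n}^{N-1}\lambda_{k,i}$, $\sum_{i=n+1}^{N}\mu_{k,i}$, $\sum_{i=n}^{N}\mu_{k,i}$ and the combination $\eta_k-O_k\rho_k+O_k\sum_i\mu_{k,i}-\sum_i\lambda_{k,i}$) matches the paper's $\widetilde{\lambda}_{k,n},\widehat{\lambda}_{k,n},\widetilde{\mu}_{k,n},\widehat{\mu}_{k,n}$ exactly, so the proof is correct and no gap remains.
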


\begin{proof}
See Appendix A.
\end{proof}

\begin{remark}\label{rem:1}
(Intuitive Explanation). From the expressions relating to the computation resource scheduling \mbox{parameters} in Theorem \ref{theorem_1}, we observe that  $\{l{_{k}}[n]\}$, $\{l{_{\mathrm{U},k}^{\mathrm{off}}}[n]\}$, and $\{l{_{\mathrm{U},k}^{\mathrm{down}}}[n]\}$  are monotonically increasing with $\{\varphi_k[n]\}$, $\{\varphi{_{\mathrm{U},k}^{\mathrm{off}}}[n]\}$ and $\{\varphi{_{\mathrm{U},k}^{\mathrm{down}}}[n]\}$ when they are positive.
It coincides with the intuition that more input (or output) data should be offloaded (or downloaded) with larger $\{\varphi_k[n]\}$, $\{\varphi{_{\mathrm{U},k}^{\mathrm{off}}}[n]\}$ and $\{\varphi{_{\mathrm{U},k}^{\mathrm{down}}}[n]\}$, corresponding to the scenarios with larger bandwidths, channel power gains  and smaller weights for energy consumption.
\end{remark}

\begin{remark}\label{rem:2}
(Decreasing Offloading and Increasing Downloading Data Size). Theorem \ref{theorem_1} sheds light on the fact that $l{_{k}^*}[n]$ decreases with the time slot index $n$ while $l{_{\mathrm{U},k}^{\mathrm{down}*}}[n]$ increases with $n$ for the reason that $\sum_{i{\rm{ = }}n+1}^{N-1}\lambda{_{k,i}^*}$ and $\sum_{i{\rm{ = }}n}^{N} \mu{_{k,i}^*}$ in \eqref{eq:l_k} and \eqref{eq:l_down_Uk} decrease with $n$ as $\lambda{_{k,i}^*}\geq0$ and $\mu{_{k,i}^*}\geq0$. This indicates that the resource allocated for UEs' task offloading gradually decreases while that for UAV's downloading gradually increases as time goes by.
\end{remark}

It is necessary to obtain the optimal values of the Lagrange multipliers, i.e., $\boldsymbol{\lambda}^*=\{\lambda{_{k,n}^*}\}_{k\in\mathcal{K}, n\in\mathcal{N}}$, $\boldsymbol{\mu}^*=\{\mu{_{k,n}^*}\}_{k\in\mathcal{K}, n\in\mathcal{N}}$, $\boldsymbol{\eta}^*=\{\eta{_{k}^*}\}_{k\in\mathcal{K}}$, $\boldsymbol{\rho}^*=\{\rho{_{k}^*}\}_{k\in\mathcal{K}}$ and  $\boldsymbol{\beta}^*=\{\beta{_{k}^*}\}_{k\in\mathcal{K}}$  since they play important roles in determining the optimal computation resource scheduling $\mathbf{z}^*$ according to Theorem \ref{theorem_1}. In this paper, we adopt a subgradient-based algorithm to obtain the optimal dual variables in  $\boldsymbol{\lambda}^*$ and $\boldsymbol{\mu}^*$  related to the inequality constraints \eqref{eq:WSECM1_1}, \eqref{eq:WSECM1_2}, as described in the following \mbox{Lemma \ref{lemma1}}.

\begin{lemma}\label{lemma1}
The dual variables $\{\lambda_{k,n}\}$ and $\{\mu_{k,n}\}$ obtained at the $(j+1)$-th ($j=1,2,\dots$) iteration of the subgradient-based algorithm are expressed as
\begin{align}
\lambda_{k,n,j+1}&=[\lambda_{k,n,j}-\varepsilon^{(\lambda)}_j\Delta\lambda_{k,n,j}]^+, \ k\in\mathcal{K}, n\in\mathcal{N}_2, \label{eq:lambda_kn} \\
\mu_{k,n,j+1}&=[\mu_{k,n,j}-\varepsilon^{(\mu)}_j\Delta\mu_{k,n,j}]^+, \ k\in\mathcal{K}, n\in\mathcal{N}_3,  \label{eq:mu_kn}
\end{align}
with the corresponding subgradients given as
{\small{
\begin{align}
\hspace{-2mm}\Delta\lambda_{k,n,j}&=\sum\limits_{i{\rm{ = }}1}^{n-1} l{_{k,j}^*}[i]-\sum\limits_{i{\rm{ = }}2}^n \left(\frac{\delta f{_{\mathrm{U},k,j}^*}[i]}{C_k}+l{_{\mathrm{U},k,j}^{\mathrm{off}*}}[i]
\right),  \label{eq:lambda1_kn} \\
\hspace{-2mm}\Delta\mu_{k,n,j}&=O_k\sum\limits_{i{\rm{ = }}2}^{n-1} \left(\frac{\delta f{_{\mathrm{U},k,j}^*}[i]}{C_k} +l{_{\mathrm{U},k,j}^{\mathrm{off}*}}[i]\right)-\sum\limits_{i{\rm{ = }}3}^n l{_{\mathrm{U},k,j}^{\mathrm{down}*}}[i],  \label{eq:mu1_kn}
\end{align}
}}
\hspace{-1.4mm}where $\varepsilon^{(\lambda)}_j$ and $\varepsilon^{(\mu)}_j$ respectively  denote the iterative steps for obtaining the dual variables in $\boldsymbol{\lambda}$ and $\boldsymbol{\mu}$ at the $j$-th iteration \cite{B_D.Bertsekas89Parallel}.
Also, $\{l{_{k,j}^*}[n]\}$, $\{f{_{\mathrm{U},k,j}^*}[i]\}$, $\{l{_{\mathrm{U},k,j}^{\mathrm{off}*}}[i]\}$, $\{l{_{\mathrm{U},k,j}^{\mathrm{down}*}}[n]\}$  are the  computation resource scheduling variables obtained through Theorem \ref{theorem_1} with the dual variables obtained at the $j$-th iteration, i.e.,  $\boldsymbol{\lambda}_j=\{\lambda_{k,n,j}\}_{k\in\mathcal{K}, n\in\mathcal{N}}$, $\boldsymbol{\mu}_j=\{\mu_{k,n,j}\}_{k\in\mathcal{K}, n\in\mathcal{N}}$, $\boldsymbol{\eta}_j=\{\eta_{k,j}\}_{k\in\mathcal{K}}$, $\boldsymbol{\rho}_{j}=\{\rho_{k,j}\}_{k\in\mathcal{K}}$ and $\boldsymbol{\beta}_j=\{\beta_{k,j}\}_{k\in\mathcal{K}}$.
\end{lemma}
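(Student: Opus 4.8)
The plan is to derive both update formulas as a projected subgradient \emph{ascent} on the Lagrange dual of the convex problem (P1.1). First I would form the partial Lagrangian that dualizes only the two families of information-causality inequalities \eqref{eq:WSECM1_1} and \eqref{eq:WSECM1_2}, attaching the multipliers $\lambda_{k,n}\geq0$ (for $n\in\mathcal{N}_2$) and $\mu_{k,n}\geq0$ (for $n\in\mathcal{N}_3$), while keeping the equality constraints \eqref{eq:WSECM1_4}--\eqref{eq:WSECM1_3} with their multipliers $\eta_k,\rho_k,\beta_k$. Writing each inequality in the standard form $h(\mathbf{z})\leq0$, the relevant Lagrangian terms are $\lambda_{k,n}\big[\sum_{i=2}^{n}(\delta f_{\mathrm{U},k}[i]/C_k+l_{\mathrm{U},k}^{\mathrm{off}}[i])-\sum_{i=1}^{n-1}l_k[i]\big]$ and $\mu_{k,n}\big[\sum_{i=3}^{n}l_{\mathrm{U},k}^{\mathrm{down}}[i]-O_k\sum_{i=2}^{n-1}(\delta f_{\mathrm{U},k}[i]/C_k+l_{\mathrm{U},k}^{\mathrm{off}}[i])\big]$. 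The dual function $g(\boldsymbol{\lambda},\boldsymbol{\mu},\boldsymbol{\eta},\boldsymbol{\rho},\boldsymbol{\beta})$ is the minimum of this Lagrangian over $\mathbf{z}$, and by Theorem~\ref{theorem_1} this minimizer is available in closed form.

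The key step is to identify the subgradient of $g$ with respect to each dualized inequality. Since the Lagrangian is affine in the multipliers for fixed $\mathbf{z}$, the dual function $g$ is a pointwise minimum of affine functions and hence concave; moreover, the strict convexity of the cubic energy terms and of the exponential offloading/downloading energies makes the Lagrangian minimizer $\mathbf{z}^*$ unique, so $g$ is in fact differentiable. By the envelope (Danskin) argument, the gradient of $g$ in the direction $\lambda_{k,n}$ is obtained by evaluating $\partial L/\partial\lambda_{k,n}$ at $\mathbf{z}^*$, i.e. the constraint expression $\sum_{i=2}^{n}(\delta f_{\mathrm{U},k}^*[i]/C_k+l_{\mathrm{U},k}^{\mathrm{off}*}[i])-\sum_{i=1}^{n-1}l_k^*[i]$, which is exactly $-\Delta\lambda_{k,n}$ with $\Delta\lambda_{k,n}$ as in \eqref{eq:lambda1_kn}; the analogous computation for $\mu_{k,n}$ yields $-\Delta\mu_{k,n}$ as in \eqref{eq:mu1_kn}. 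Intuitively, each subgradient is simply the slack of the corresponding causality constraint measured at the current primal iterate.

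Finally, because the dual problem maximizes the concave $g$ subject only to $\lambda_{k,n}\geq0$ and $\mu_{k,n}\geq0$, I would apply the projected ascent step $\lambda_{k,n,j+1}=[\lambda_{k,n,j}+\varepsilon_j^{(\lambda)}\,\partial g/\partial\lambda_{k,n}]^+$. Substituting the subgradient $-\Delta\lambda_{k,n,j}$ turns the ``$+$'' into the ``$-$'' appearing in \eqref{eq:lambda_kn}, and the projection $[\cdot]^+$ onto the nonnegative orthant enforces dual feasibility; the identical reasoning produces \eqref{eq:mu_kn}. Convergence to $\boldsymbol{\lambda}^*,\boldsymbol{\mu}^*$ then follows from the standard appropriately chosen step-size rules for subgradient methods~\cite{B_D.Bertsekas89Parallel}, with the primal iterates at each step supplied by Theorem~\ref{theorem_1}. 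The main obstacle is getting the sign convention exactly right---verifying that the ascent direction, combined with defining $\Delta\lambda_{k,n}$ as the constraint \emph{slack} rather than the constraint value, reproduces the minus sign in the stated updates---rather than any deep analytical difficulty, since the closed-form minimizer already resolves the inner minimization.
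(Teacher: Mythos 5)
Your proposal is correct and takes essentially the same route the paper implicitly relies on: the paper states Lemma~\ref{lemma1} without proof, and its justification is exactly your argument --- dualize the two information-causality inequality families in the partial Lagrangian of Appendix~A, use the Theorem~\ref{theorem_1} minimizer $\mathbf{z}^*_j$ to evaluate the constraint slacks $\Delta\lambda_{k,n,j}$, $\Delta\mu_{k,n,j}$ as (negative) subgradients of the concave dual function, and take a projected ascent step with $[\cdot]^+$ enforcing $\lambda_{k,n},\mu_{k,n}\geq0$, with convergence from standard step-size rules. One minor over-claim: differentiability of the dual via uniqueness of $\mathbf{z}^*$ is neither needed nor guaranteed (the closed-form solutions involve $[\cdot]^+$ kinks and boundary cases), but this is harmless since the subgradient property alone suffices, which is what you actually use.
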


Besides, the bi-section search method is used to obtain the optimal dual variables in $\boldsymbol{\eta}^*$, $\boldsymbol{\rho}^*$ and  $\boldsymbol{\beta}^*$  related to the equality constraints \eqref{eq:WSECM1_4}--\eqref{eq:WSECM1_3}, as summarized  in Lemma \ref{lemma2}.

\begin{lemma}\label{lemma2}
With the obtained $\boldsymbol{\lambda}_{j+1}$ and $\boldsymbol{\mu}_{j+1}$ above, the corresponding $\boldsymbol{\eta}_{j+1}$, $\boldsymbol{\rho}_{j+1}$ and  $\boldsymbol{\beta}_{j+1}$ can be obtained by bi-section search of $\{\beta_{k,j+1}\}_{k\in\mathcal{K}}\in[0,\{\beta_{k,\mathrm{max}}\}_{k\in\mathcal{K}})$ where $\beta_{k,\mathrm{max}}=3C_kw_k\kappa_k(\frac{I_kC_k}{T})^2$. For each given $\beta_{k,j+1}\in[0,\beta_{k,\mathrm{max}})$, the corresponding $\eta_{k,j+1}$ and $\rho_{k,j+1}$ can be obtained with another two bi-section searches within $\eta_{k,j+1}\in[\eta_{k,j+1}^{\mathrm{low}},\eta_{k,j+1}^{\mathrm{up}}]$ and $\rho_{k,j+1}\in[\rho_{k,j+1}^{\mathrm{low}},\rho_{k,j+1}^{\mathrm{up}}]$ to make the expressions satisfy \eqref{eq:C_7}$=$\eqref{eq:C_8} and \eqref{eq:C_7}$=$\eqref{eq:C_9},  respectively, in Appendix B, where the expressions of $\eta_{k,j+1}^{\mathrm{low}}$, $\eta_{k,j+1}^{\mathrm{up}}$, $\rho_{k,j+1}^{\mathrm{low}}$, and $\rho_{k,j+1}^{\mathrm{up}}$ are given in \eqref{eq:C_11}--\eqref{eq:C_14} in Appendix B.
The optimal $\beta_{k,j+1}$, $\eta_{k,j+1}$ and $\rho_{k,j+1}$ should satisfy \eqref{eq:C_7}$=$\eqref{eq:C_10}.
\end{lemma}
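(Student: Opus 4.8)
The plan is to verify Lemma \ref{lemma2} by substituting the closed-form primal optimizers of Theorem \ref{theorem_1} into the three equality constraints \eqref{eq:WSECM1_4}, \eqref{eq:WSECM1_5} and \eqref{eq:WSECM1_3}, and then exploiting the monotone dependence of the resulting expressions on the triple $(\beta_k,\eta_k,\rho_k)$ to justify a nested bisection. First I would recast the three equalities in the pivot form used in Appendix B, i.e. \eqref{eq:C_7}$=$\eqref{eq:C_8}, \eqref{eq:C_7}$=$\eqref{eq:C_9}, and \eqref{eq:C_7}$=$\eqref{eq:C_10}, where \eqref{eq:C_7} is the total offloaded volume $\sum_{n=1}^{N-2} l{_{k}^*}[n]$, \eqref{eq:C_8} is the UAV's compute-plus-forward volume $\sum_{n=2}^{N-1}\big(\delta f{_{\mathrm{U},k}^*}[n]/C_k+l{_{\mathrm{U},k}^{\mathrm{off}*}}[n]\big)$, \eqref{eq:C_9} is the scaled download volume $O_k^{-1}\sum_{n=3}^{N} l{_{\mathrm{U},k}^{\mathrm{down}*}}[n]$, and \eqref{eq:C_10} is the residual $I_k-\sum_{n=1}^{N}(\tau/C_k)f{_{k}^*}[n]$. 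At the optimum all four coincide, which is precisely the data-flow conservation of UE $k$'s task through the UE--UAV--AP chain.

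Next I would record the monotonicities that drive the search, read directly off \eqref{eq:f_k}--\eqref{eq:l_down_Uk}: \eqref{eq:C_7} is nondecreasing in $\beta_k$ and nonincreasing in $\eta_k$, since $l{_{k}^*}[n]$ carries $[\,\sum_{i}\lambda{_{k,i}^*}+\beta_k-\eta_k\,]^+$ inside its logarithm; \eqref{eq:C_8} is nondecreasing in $\eta_k$ and nonincreasing in $\rho_k$ through the common argument $\eta_k-O_k\rho_k$; \eqref{eq:C_9} is nondecreasing in $\rho_k$; and \eqref{eq:C_10} is strictly decreasing in $\beta_k$. Because every optimizer is clipped by a $[\cdot]^+$, these four quantities are piecewise but still monotone, so each scalar equation admits a unique root that bisection can locate.

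With these properties in hand, I would organize the root-finding into three layers. Fixing $\beta_k$ turns \eqref{eq:C_7} into a decreasing function of $\eta_k$ alone and freezes \eqref{eq:C_10}; for a trial $\eta_k$, the equation \eqref{eq:C_7}$=$\eqref{eq:C_9} has a fixed left side and a right side increasing in $\rho_k$, so the inner bisection on $\rho_k\in[\rho{_{k}^{\mathrm{low}}},\rho{_{k}^{\mathrm{up}}}]$ returns a unique $\rho_k(\eta_k)$ that is itself nonincreasing in $\eta_k$. Substituting this into \eqref{eq:C_7}$=$\eqref{eq:C_8} leaves one equation whose left side decreases and whose right side increases in $\eta_k$ (the induced fall of $\rho_k$ only reinforces the rise of \eqref{eq:C_8}), so the middle bisection on $\eta_k\in[\eta{_{k}^{\mathrm{low}}},\eta{_{k}^{\mathrm{up}}}]$ has a unique root. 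The outer bisection then drives $\beta_k$ so that \eqref{eq:C_7}$=$\eqref{eq:C_10}: as $\beta_k$ grows, \eqref{eq:C_10} falls, and I expect the solved \eqref{eq:C_7} to rise, so that the residual \eqref{eq:C_7}$-$\eqref{eq:C_10} is monotone and its root unique.

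Finally I would derive the search intervals. The cap $\beta_{k,\mathrm{max}}=3C_kw_k\kappa_k(I_kC_k/T)^2$ follows from feasibility of \eqref{eq:WSECM1_3}: since $\sum_{n} l{_{k}}[n]\ge0$ forces $\sum_{n=1}^{N}(\tau/C_k)f{_{k}^*}[n]\le I_k$, and \eqref{eq:f_k} makes $f{_{k}^*}[n]$ constant across the $N$ slots, one gets $f{_{k}^*}\le I_kC_k/T$ and hence the stated bound on $\beta_k=3C_kw_k\kappa_k (f{_{k}^*})^2$. The endpoints in \eqref{eq:C_11}--\eqref{eq:C_14} then come from pushing the $[\cdot]^+$ arguments in \eqref{eq:l_k}, \eqref{eq:f_Uk}--\eqref{eq:l_off_Uk} and \eqref{eq:l_down_Uk} to their activation thresholds, which bracket the region where each volume passes from zero to positive. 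The main obstacle I anticipate is the outer layer: I must show that the implicitly defined pair $(\eta_k(\beta_k),\rho_k(\beta_k))$ keeps \eqref{eq:C_7} monotone in $\beta_k$, i.e. that the feedback through the inner bisections does not reverse the sign of its slope, and that this survives the nonsmooth $[\cdot]^+$ kinks; the individual inner monotonicities, by contrast, are immediate from the closed forms.
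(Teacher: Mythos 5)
Your high-level plan (substitute the closed forms from Theorem~\ref{theorem_1} into the task/causality equalities, then exploit monotonicity to justify nested bisection) is in the spirit of the paper, and your derivation of $\beta_{k,\mathrm{max}}$ is exactly the paper's argument. However, there is a genuine gap: you have misidentified the four pivot expressions, and this breaks both the monotonicity table and the search structure. In Appendix~B, \eqref{eq:C_7} is \emph{not} the closed-form sum $\sum_{n}l_k^*[n]$; it is the residual $I_k-\frac{T}{C_k}\sqrt{\beta_{k}/(3C_kw_k\kappa_k)}$, a function of $\beta_k$ \emph{alone} (what you call \eqref{eq:C_10}); the paper's \eqref{eq:C_8} is the closed-form offload sum depending on $\beta_k-\eta_k$ (what you call \eqref{eq:C_7}); and the paper's \eqref{eq:C_10} is the compute-plus-forward volume depending on $\eta_k-O_k\rho_k$ (what you call \eqref{eq:C_8}). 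Under the actual definitions, your claims ``\eqref{eq:C_7} is nondecreasing in $\beta_k$ and nonincreasing in $\eta_k$,'' ``\eqref{eq:C_8} is nondecreasing in $\eta_k$ and nonincreasing in $\rho_k$,'' and ``\eqref{eq:C_10} is strictly decreasing in $\beta_k$'' are all false as stated.

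This mislabeling is not cosmetic, because the whole point of the paper's choice of pivot is that \eqref{eq:C_7} depends only on $\beta_k$. Consequently, for fixed $\beta_{k,j+1}$ the equations \eqref{eq:C_7}$=$\eqref{eq:C_8} (one unknown $\eta_k$) and \eqref{eq:C_7}$=$\eqref{eq:C_9} (one unknown $\rho_k$) are \emph{decoupled} single-variable monotone equations --- this is what the lemma means by ``another two bi-section searches \ldots respectively.'' Your construction instead nests them (your inner $\rho_k$ search depends on the trial $\eta_k$), which both contradicts the lemma's stated procedure and creates the difficulty you yourself flag: proving that the implicit pair $(\eta_k(\beta_k),\rho_k(\beta_k))$ leaves your outer residual monotone in $\beta_k$. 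In the paper's decomposition this step is immediate: from \eqref{eq:C_7}--\eqref{eq:C_9}, $\eta_k(\beta_k)$ is nondecreasing and $\rho_k(\beta_k)$ is nonincreasing, hence \eqref{eq:C_10} is nondecreasing in $\beta_k$ while \eqref{eq:C_7} is strictly decreasing, so the outer equation \eqref{eq:C_7}$=$\eqref{eq:C_10} has a monotone residual and a unique root. Since you leave precisely this outer-layer monotonicity unproven (``the main obstacle I anticipate''), and it is the step your permuted decomposition makes hard, the proof does not close as written; adopting the paper's pivot would repair it.
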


\begin{proof}
See Appendix B.
\end{proof}

The optimal dual variables $\boldsymbol{\lambda}^*,\boldsymbol{\mu}^*$ and $\boldsymbol{\eta}^*,\boldsymbol{\rho}^*, \boldsymbol{\beta}^*$ can be finally obtained when the subgradient algorithm converges, and the bi-section searches terminate. Note that  the corresponding convergence can be guaranteed according to \cite{B_Boyd04Convex}.

\subsection{Bandwidth Allocation with Fixed UAV Trajectory and Computation Resource Scheduling}\label{bandwidth_allocation}
Here, another sub-problem of (P1), denoted as the bandwidth allocation problem (P1.2) is considered to optimize $\mathbf{B}$ with the same given UAV's trajectory $\mathbf{u}$  and the optimized computation resource scheduling parameters in $\mathbf{z}$.
The bandwidth allocation problem (P1.2) is expressed as
\begin{subeqnarray}\label{eq:BA}
 ({\rm P1.2}):
 \underset{\mathbf{B}}{\min} &&\hspace{-4mm}
  \sum\limits_{n{\rm{ = }}1}^N\left(w_\mathrm{U}E{_{\mathrm{U}}^{(2)}}[n]+\sum\limits_{k{\rm{ = }}1}^Kw_kE{_k^{\mathrm{off}}}[n]\right) \quad\quad \slabel{eq:BA_0}\\
\mathrm{s.t.}
&&\hspace{-4mm} \eqref{eq:WSECM1_12}, \eqref{eq:WSECM1_13}-\eqref{eq:WSECM1_16}, \slabel{eq:BA_1} 
\end{subeqnarray}
where $E{_{\mathrm{U}}^{(2)}}[n]=\sum\limits_{k{\rm{ = }}1}^K \Big(E{_{\mathrm{U},k}^{\mathrm{off}}}[n]+
E{_{\mathrm{U},k}^{\mathrm{down}}}[n]\Big)$. It can be easily proved that problem (P1.2) is convex with  convex objective function and constraints. To gain more insights on the structure of the optimal solution, we again leverage the Lagrange method \cite{B_Boyd04Convex} to solve this problem, and the optimal solution to problem (P1.2) is given in the following theorem.

\begin{theorem}\label{theorem_2}
The optimal solution of problem (P1.2)  related to UE $k\in\mathcal{K}$ is given by
\vspace{-3mm}

{\small{
\begin{align}
\hspace{-2mm}B{_k^{\mathrm{off}*}}[n]&=\left\{
\begin{aligned}
&\frac{\frac{\ln2}{2}l_k[n]}{\delta W_0\big[\frac{\ln2}{2}(\frac{\phi_{k,n}}{w_k}h_k[n]l_k[n])^{\frac{1}{2}}\big]},~n\in\mathcal{N}_1, \\
&0, \quad n=N-1~\mathrm{or}~N, \\
\end{aligned}\right. \label{B_k_off}\\
\hspace{-2mm}B{_{\mathrm{U},k}^{\mathrm{off}*}}[n]&=\left\{
\begin{aligned}
&\frac{\frac{\ln2}{2}l{_{\mathrm{U},k}^{\mathrm{off}}}[n]}{\delta W_0\big[\frac{\ln2}{2}(\frac{\phi{_{k,n}}}{w_\mathrm{U}}h_{\mathrm{AP}}[n]l{_{\mathrm{U},k}^{\mathrm{off}}}[n])^{\frac{1}{2}}\big]},~n\in\mathcal{N}_2, \\
&0, \quad n=1~\mathrm{or}~N, \\
\end{aligned}\right. \label{B_Uk_off}\\
\hspace{-2mm}B{_{\mathrm{U},k}^{\mathrm{down}*}}[n]&=\left\{
\begin{aligned}
&\frac{\frac{\ln2}{2}l{_{\mathrm{U},k}^{\mathrm{down}}}[n]}{\delta W_0\big[\frac{\ln2}{2}(\frac{\phi{_{k,n}}}{w_\mathrm{U}}h_k[n]l{_{\mathrm{U},k}^{\mathrm{down}}}[n])^{\frac{1}{2}}\big]},~n\in\mathcal{N}_3, \\
&0, \quad n=1~\mathrm{or}~2, \\
\end{aligned}\right. \label{B_Uk_down}
\end{align}
}}
\hspace{-1mm}where $\phi_{k,n}=\frac{\nu{_{k,n}^*}}{\delta^2N_0\ln2}$ with $\{\nu{_{k,n}^*}\}_{k\in\mathcal{K}, n\in\mathcal{N}}$  being the optimal Lagrange multipliers (dual variables) associated with the equality constraints in \eqref{eq:WSECM1_12} of problem (P1.2) (or P1), and  $W_0(x)$ is the principal branch of the Lambert $W$ function defined as the solution of $W_0(x)e^{W_0(x)}=x$ \cite{J_Corless1996LambertW}.
\end{theorem}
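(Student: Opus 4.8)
The plan is to characterize the optimum of (P1.2) through its KKT conditions, which are necessary and sufficient because, as already noted, the problem is convex. I would attach one multiplier $\nu_{k,n}$ to each equality constraint in \eqref{eq:WSECM1_12} and keep the sign constraints \eqref{eq:WSECM1_13}--\eqref{eq:WSECM1_16} implicit, to be verified afterwards. Since the objective \eqref{eq:BA_0} is separable over $(k,n)$ and \eqref{eq:WSECM1_12} couples only the three bandwidths $B_k^{\mathrm{off}}[n]$, $B_{\mathrm{U},k}^{\mathrm{off}}[n]$, $B_{\mathrm{U},k}^{\mathrm{down}}[n]$ of a single index pair, the problem splits into independent three-variable problems, one per $(k,n)$, which I would solve in isolation.

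For a slot in which the associated data are strictly positive the optimum is interior, so I would set the partial derivative of the Lagrangian with respect to each bandwidth to zero. Differentiating the energy expressions \eqref{eq:E_off_k}, \eqref{eq:E_UAV_off_k} and \eqref{eq:E_UAV_down_k} gives, for every one of the three variables, a stationarity relation of the common form
\[
\nu_{k,n}=w\,\frac{N_0\,l\,\ln 2}{h\,B^{2}}\,2^{\frac{l}{\delta B}},
\]
where $(w,h,l,B)$ denotes the matching quadruple of weight, channel gain, data size and bandwidth: $(w_k,h_k[n],l_k[n],B_k^{\mathrm{off}}[n])$ for the UE offloading term, $(w_\mathrm{U},h_{\mathrm{AP}}[n],l_{\mathrm{U},k}^{\mathrm{off}}[n],B_{\mathrm{U},k}^{\mathrm{off}}[n])$ for the UAV-to-AP term, and $(w_\mathrm{U},h_k[n],l_{\mathrm{U},k}^{\mathrm{down}}[n],B_{\mathrm{U},k}^{\mathrm{down}}[n])$ for the download term.

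The crux is inverting this transcendental relation. Writing $y=\tfrac{l}{\delta B}$ turns it into $y^{2}2^{y}=M$ with $M=\tfrac{\nu_{k,n}h\,l}{w N_0\ln 2\,\delta^{2}}$; taking the positive root yields $y\,2^{y/2}=\sqrt{M}$, and the substitution $u=\tfrac{\ln 2}{2}y$ collapses it to $u\,e^{u}=\tfrac{\ln 2}{2}\sqrt{M}$, so that $u=W_0\!\big(\tfrac{\ln 2}{2}\sqrt{M}\big)$ by the defining property of the principal Lambert branch. Unwinding $B=\tfrac{l}{\delta y}$ and noting that $M=\tfrac{\phi_{k,n}}{w}h\,l$ for $\phi_{k,n}=\tfrac{\nu_{k,n}}{\delta^{2}N_0\ln 2}$, I recover precisely \eqref{B_k_off}--\eqref{B_Uk_down}. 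The zero branches, for instance $n=N-1,N$ in \eqref{B_k_off}, are immediate: the boundary conditions in \eqref{eq:WSECM1_11}, \eqref{eq:WSECM1_7} and \eqref{eq:WSECM1_8} force the relevant data to vanish, the corresponding energy becomes independent of that bandwidth, and \eqref{eq:WSECM1_13}--\eqref{eq:WSECM1_16} then assign it zero.

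I expect the Lambert-$W$ inversion to be the one genuinely delicate step: the stationarity equation must be coaxed into the canonical form $u e^{u}=\mathrm{const}$, and one must check that $\tfrac{\ln 2}{2}\sqrt{M}\ge 0$ so that $W_0$ returns the unique positive root. What remains is fixing $\nu_{k,n}$ itself. Since $W_0$ is increasing and $M$ grows with $\nu_{k,n}$, each of \eqref{B_k_off}--\eqref{B_Uk_down} is strictly decreasing in $\nu_{k,n}$; hence their sum decreases monotonically, and the equality \eqref{eq:WSECM1_12} pins down $\nu_{k,n}$ uniquely, to be located by a one-dimensional bisection. This is a routine monotonicity argument rather than a real obstacle.
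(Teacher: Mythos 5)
Your proposal is correct and follows essentially the same route as the paper's Appendix~C proof: attaching multipliers $\nu_{k,n}$ to the bandwidth equality constraints, exploiting separability, writing the KKT stationarity condition for each bandwidth, substituting $\xi = l/(\delta B)$ to reduce it to $\xi^2 2^{\xi} = \Gamma$, and inverting via the principal Lambert branch $W_0$ to obtain \eqref{B_k_off}--\eqref{B_Uk_down}, with the zero branches forced by the boundary constraints and $\nu_{k,n}$ pinned down by monotonicity and bisection. The only cosmetic differences are that you decompose per $(k,n)$ rather than per UE $k$ and take a square root where the paper passes through a logarithm--exponentiation step; the algebra is identical.
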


\begin{proof}
See Appendix C.
\end{proof}

\begin{lemma}\label{lemma3}
(Exclusive Bandwidth Allocation). According to the optimal bandwidth allocation results in Theorem \ref{theorem_2} combining with the equality constraints in \eqref{eq:WSECM1_12}, we have
\begin{align}
B{_k^{\mathrm{off}*}}[n]&=B, \ \mathrm{if}~ l_k[n]>0, l{_{\mathrm{U},k}^{\mathrm{off}}}[n]=l{_{\mathrm{U},k}^{\mathrm{down}}}[n]=0, \label{eq:BB_UE} \\
B{_{\mathrm{U},k}^{\mathrm{off}*}}[n]&=B, \ \mathrm{if}~  l{_{\mathrm{U},k}^{\mathrm{off}}}[n]>0, l_k[n]=l{_{\mathrm{U},k}^{\mathrm{down}}}[n]=0,  \label{eq:BB_U_off}\\
B{_{\mathrm{U},k}^{\mathrm{down}*}}[n]&=B, \ \mathrm{if}~ l{_{\mathrm{U},k}^{\mathrm{down}}}[n]>0, l_k[n]=l{_{\mathrm{U},k}^{\mathrm{off}}}[n]=0,  \label{eq:BB_U_down}
\end{align}
where the whole bandwidth is exclusively occupied when only one of $l_k[n]$, $l{_{\mathrm{U},k}^{\mathrm{off}}}[n]$, $l{_{\mathrm{U},k}^{\mathrm{down}}}[n]$ is positive for any $k\in\mathcal{K}$, $n\in\mathcal{N}$.
Also, it is always sure that
\begin{align} \label{B1BN}
B{_k^{\mathrm{off}*}}[1]=B, \ B{_{\mathrm{U},k}^{\mathrm{down}*}}[N]=B, \ k\in\mathcal{K}.
\end{align}
\end{lemma}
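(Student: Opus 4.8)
The plan is to derive \eqref{eq:BB_UE}--\eqref{eq:BB_U_down} as an immediate consequence of the closed-form allocations \eqref{B_k_off}--\eqref{B_Uk_down} in Theorem~\ref{theorem_2} together with the bandwidth equality \eqref{eq:WSECM1_12}. The central observation I would isolate first is a \emph{zero-data--zero-bandwidth} property: a link that carries no task bits in slot $n$ receives no bandwidth at the optimum, i.e. $l_k[n]=0\Rightarrow B{_k^{\mathrm{off}*}}[n]=0$, $l{_{\mathrm{U},k}^{\mathrm{off}}}[n]=0\Rightarrow B{_{\mathrm{U},k}^{\mathrm{off}*}}[n]=0$, and $l{_{\mathrm{U},k}^{\mathrm{down}}}[n]=0\Rightarrow B{_{\mathrm{U},k}^{\mathrm{down}*}}[n]=0$. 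Once this is in hand, the three claims follow by elimination: if exactly one of the three sizes is positive, the other two contribute zero bandwidth, so \eqref{eq:WSECM1_12} forces the single active link to absorb the entire budget $B$.

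To establish the zero-data--zero-bandwidth property I would use the structure of \eqref{B_k_off}--\eqref{B_Uk_down}, each of which has the active task size in the numerator and a Lambert-$W$ term in the denominator whose argument also vanishes with that size. Taking $B{_k^{\mathrm{off}*}}[n]$ as representative, as $l_k[n]\to0^+$ the argument $\tfrac{\ln2}{2}\big(\tfrac{\phi_{k,n}}{w_k}h_k[n]l_k[n]\big)^{1/2}$ tends to $0$, and since $W_0(x)\sim x$ as $x\to0^+$ (from $W_0(x)e^{W_0(x)}=x$), the denominator is asymptotically $\delta\cdot\tfrac{\ln2}{2}\big(\tfrac{\phi_{k,n}}{w_k}h_k[n]l_k[n]\big)^{1/2}$, whence $B{_k^{\mathrm{off}*}}[n]\sim \delta^{-1}\sqrt{w_k l_k[n]/(\phi_{k,n}h_k[n])}\to0$. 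The same asymptotics apply verbatim to $B{_{\mathrm{U},k}^{\mathrm{off}*}}[n]$ and $B{_{\mathrm{U},k}^{\mathrm{down}*}}[n]$. I would cross-check this with the energy expressions \eqref{eq:E_off_k}, \eqref{eq:E_UAV_off_k} and \eqref{eq:E_UAV_down_k}: an empty link contributes zero energy for \emph{any} nonnegative bandwidth, while an active link's energy is strictly decreasing in its own bandwidth; hence transferring bandwidth away from an empty link onto an active one strictly lowers the objective, so no optimal solution wastes bandwidth on a silent link. This monotonicity argument both confirms the limit and sidesteps the formal $0/0$ indeterminacy of \eqref{B_k_off}--\eqref{B_Uk_down} when a size is exactly zero.

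Finally, for the boundary identities \eqref{B1BN} I would not even need the positivity of any task size: the boundary conditions already embedded in (P1) pin down two of the three bandwidths at the endpoint slots. At $n=1$ the constraints \eqref{eq:WSECM1_15} and \eqref{eq:WSECM1_16} give $B{_{\mathrm{U},k}^{\mathrm{off}*}}[1]=B{_{\mathrm{U},k}^{\mathrm{down}*}}[1]=0$, so \eqref{eq:WSECM1_12} leaves $B{_k^{\mathrm{off}*}}[1]=B$; symmetrically, at $n=N$ the constraints \eqref{eq:WSECM1_13} and \eqref{eq:WSECM1_15} give $B{_k^{\mathrm{off}*}}[N]=B{_{\mathrm{U},k}^{\mathrm{off}*}}[N]=0$, forcing $B{_{\mathrm{U},k}^{\mathrm{down}*}}[N]=B$. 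I expect the only delicate step to be the zero-data--zero-bandwidth claim, precisely because \eqref{B_k_off}--\eqref{B_Uk_down} become $0/0$ at vanishing size; the cleanest resolution is the energy-monotonicity argument above, with the Lambert-$W$ limit serving as a consistency check, after which the rest is pure bookkeeping with \eqref{eq:WSECM1_12}.
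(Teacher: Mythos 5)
Your proposal is correct and follows essentially the same route as the paper, which states Lemma~\ref{lemma3} without a separate proof as an immediate consequence of the closed-form allocations in Theorem~\ref{theorem_2} together with the equality constraint \eqref{eq:WSECM1_12} (and, for \eqref{B1BN}, the boundary constraints \eqref{eq:WSECM1_13}--\eqref{eq:WSECM1_16}). Your explicit handling of the $0/0$ degeneracy at vanishing task size---via the energy-monotonicity argument, with the Lambert-$W$ asymptotics $W_0(x)\sim x$ as a consistency check---is a careful filling-in of a detail the paper leaves implicit, not a different method.
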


The optimal Lagrange multipliers $\{\nu{_{k,n}^*}\}$ for obtaining the optimal bandwidth allocation in Theorem \ref{theorem_2} correspond to $\{\phi_{k,n}\}$, which should make the equality constraints in \eqref{eq:WSECM1_12} satisfied. In fact, $\phi_{k,n}$ can be obtained effectively with the bi-section search when the bandwidth is not exclusively occupied, i.e., at least two of $l_k[n]$, $l{_{\mathrm{U},k}^{\mathrm{off}}}[n]$, $l{_{\mathrm{U},k}^{\mathrm{down}}}[n]$ are positive,  since $\{B{_k^{\mathrm{off}*}}[n]\}_{n\in\mathcal{N}_1}$, $\{B{_{\mathrm{U},k}^{\mathrm{off}*}}[n]\}_{n\in\mathcal{N}_2}$ and $\{B{_{\mathrm{U},k}^{\mathrm{down}*}}[n]\}_{n\in\mathcal{N}_3}$ are all monotonically decreasing functions with respect to (w.r.t.) $\{\phi_{k,n}\}$ according to the property of the $W_0$ function. Besides, we can obtain tight search ranges using the results in Lemma~\ref{lemma4}.

\begin{lemma}\label{lemma4}
A tight bi-section search range of $\phi_{k,n}$ ($k\in\mathcal{K}$) for any slot $n\in\mathcal{N}$ with non-exclusive bandwidth is given as $\phi_{k,n}\in[\phi_{k,n}^\mathrm{min},\phi_{k,n}^\mathrm{max}]$ where
\begin{align} \label{phi_range}
&&&\phi_{k,n}^\mathrm{min}~(\mathrm{or}~\phi_{k,n}^\mathrm{max})=\min~(\mathrm{or}~\max)\\
&&&\left\{
\begin{aligned}
&\{\phi_{\mathrm{UE},k,n}(B/3),
\phi_{\mathrm{U},k,n}^{\mathrm{off}}(B/3),\phi_{\mathrm{U},k,n}^{\mathrm{down}}(B/3)\},~\mathrm{\mathbf{case~1}}\\
&\{\phi_{\mathrm{UE},k,n}(B/2),\phi_{\mathrm{U},k,n}^{\mathrm{off}}(B/2)\}, ~~\mathrm{\mathbf{case~2}}\\
&\{\phi_{\mathrm{UE},k,n}(B/2),
\phi_{\mathrm{U},k,n}^{\mathrm{down}}(B/2)\}, ~~\mathrm{\mathbf{case~3}}\\
&\{\phi_{\mathrm{U},k,n}^{\mathrm{off}}(B/2),
\phi_{\mathrm{U},k,n}^{\mathrm{down}}(B/2)\}, ~~~\mathrm{\mathbf{case~4}}
\end{aligned}\right. \nonumber
\end{align}
where \emph{\textbf{case 1}-\textbf{case 4}} are distinguished by the values of $l_k[n]$, $l{_{\mathrm{U},k}^{\mathrm{off}}}[n]$ and $l{_{\mathrm{U},k}^{\mathrm{down}}}[n]$ for each $n\in\mathcal{N}$. For \emph{\textbf{case 1}}, all the three parameters have positive values; for \emph{\textbf{case 2}}, $l{_{\mathrm{U},k}^{\mathrm{down}}}[n]=0$; for \emph{\textbf{case 3}}, $l{_{\mathrm{U},k}^{\mathrm{off}}}[n]=0$; for \emph{\textbf{case 4}}, $l_k[n]=0$. In \eqref{phi_range},
\begin{align}
\phi_{\mathrm{UE},k,n}(x)&=\frac{w_kl_k[n]}{\delta^2 x^2h_k[n]}e^{\frac{l_k[n]\ln2}{\delta x}},\ k\in\mathcal{K}, n\in\mathcal{N}, \\
\phi_{\mathrm{U},k,n}^{\mathrm{off}}(x)&=\frac{w_\mathrm{U}l{_{\mathrm{U},k}^{\mathrm{off}}}[n]}{\delta^2 x^2 h_{\mathrm{AP}}[n]}e^{\frac{l{_{\mathrm{U},k}^{\mathrm{off}}}[n]\ln2}{\delta x}},\ k\in\mathcal{K}, n\in\mathcal{N}, \\
\phi_{\mathrm{U},k,n}^{\mathrm{down}}(x)&=\frac{w_\mathrm{U}l{_{\mathrm{U},k}^{\mathrm{down}}}[n]}{\delta^2 x^2 h_{k}[n]}e^{\frac{l{_{\mathrm{U},k}^{\mathrm{down}}}[n]\ln2}{\delta x}}, \ k\in\mathcal{K}, n\in\mathcal{N},
\end{align}
which are the value of $\phi_{k,n}$ obtained by letting the expressions of $B{_k^{\mathrm{off}*}}[n]$, $B{_{\mathrm{U},k}^{\mathrm{off}*}}[n]$ and $B{_{\mathrm{U},k}^{\mathrm{down}*}}[n]$ in \eqref{B_k_off}--\eqref{B_Uk_down} equal to $x$.
\end{lemma}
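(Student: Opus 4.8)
The plan is to leverage the monotonicity of the three optimal-bandwidth expressions in \eqref{B_k_off}--\eqref{B_Uk_down} together with a pigeonhole argument applied to the budget constraint \eqref{eq:WSECM1_12}. First I would record the structural fact already noted above: because $W_0(\cdot)$ is increasing on $[0,\infty)$ and the argument of $W_0$ in each of \eqref{B_k_off}--\eqref{B_Uk_down} scales like $\sqrt{\phi_{k,n}}$, each of $B{_k^{\mathrm{off}*}}[n]$, $B{_{\mathrm{U},k}^{\mathrm{off}*}}[n]$, $B{_{\mathrm{U},k}^{\mathrm{down}*}}[n]$ is a strictly decreasing function of the common quantity $\phi_{k,n}$. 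Correspondingly, $\phi_{\mathrm{UE},k,n}(x)$, $\phi_{\mathrm{U},k,n}^{\mathrm{off}}(x)$ and $\phi_{\mathrm{U},k,n}^{\mathrm{down}}(x)$ are exactly the inverse maps: $\phi_{\mathrm{UE},k,n}(x)$ is the unique value of $\phi_{k,n}$ at which $B{_k^{\mathrm{off}*}}[n]=x$, and similarly for the other two. I would also verify that each inverse map is itself strictly decreasing in $x$, which follows immediately from its closed form, since both the $x^{-2}$ prefactor and the exponential argument decrease in $x$.

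Next I would fix a slot $n$ with non-exclusive bandwidth and let $\phi_{k,n}$ be its optimal value, at which the budget constraint $B{_k^{\mathrm{off}*}}[n]+B{_{\mathrm{U},k}^{\mathrm{off}*}}[n]+B{_{\mathrm{U},k}^{\mathrm{down}*}}[n]=B$ holds. The core observation is the pigeonhole principle: if $m$ of these bandwidths are positive and they sum to $B$, then the largest of them is at least $B/m$ and the smallest is at most $B/m$. In \textbf{case 1} all three are positive, so $m=3$ and the threshold is $B/3$; in \textbf{case 2}, \textbf{case 3} and \textbf{case 4} exactly one of $l{_{\mathrm{U},k}^{\mathrm{down}}}[n]$, $l{_{\mathrm{U},k}^{\mathrm{off}}}[n]$, $l_k[n]$ vanishes, so $m=2$, the threshold becomes $B/2$, and the vanishing term is dropped from the collection of active maps.

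Then I would convert these bandwidth bounds into bounds on $\phi_{k,n}$ through the monotonicity of the first step. Whichever active bandwidth attains the maximum is $\ge B/m$; applying its strictly decreasing inverse map to this inequality forces $\phi_{k,n}$ to be no larger than that map evaluated at $B/m$, so $\phi_{k,n}\le\phi_{k,n}^\mathrm{max}$ after taking the maximum over all active inverse maps at $B/m$. Symmetrically, the minimizing active bandwidth is $\le B/m$, which pushes $\phi_{k,n}$ above the corresponding inverse value, giving $\phi_{k,n}\ge\phi_{k,n}^\mathrm{min}$. Assembling the two inequalities yields the claimed search interval $[\phi_{k,n}^\mathrm{min},\phi_{k,n}^\mathrm{max}]$ in \eqref{phi_range}.

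The only delicate point, and the step I expect to require the most care, is the bookkeeping of which inverse maps remain active in each case, since a vanishing $l_k[n]$, $l{_{\mathrm{U},k}^{\mathrm{off}}}[n]$ or $l{_{\mathrm{U},k}^{\mathrm{down}}}[n]$ both removes the corresponding bandwidth from the budget and switches the pigeonhole threshold from $B/3$ to $B/2$, which is precisely what distinguishes the four branches of \eqref{phi_range}. Once this case split is set up, everything else is a direct consequence of the sum constraint and the strict monotonicity of the bandwidth expressions and their inverses.
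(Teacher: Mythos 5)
Your proposal is correct, and it is essentially the argument the paper intends: the paper states Lemma~\ref{lemma4} without a formal proof, relying only on the preceding remark that $B{_k^{\mathrm{off}*}}[n]$, $B{_{\mathrm{U},k}^{\mathrm{off}*}}[n]$, $B{_{\mathrm{U},k}^{\mathrm{down}*}}[n]$ are strictly decreasing in $\phi_{k,n}$ (by the property of $W_0$) together with the budget constraint \eqref{eq:WSECM1_12}, and your pigeonhole step (largest active bandwidth $\geq B/m$, smallest $\leq B/m$, then invert through the strictly decreasing maps $\phi_{\mathrm{UE},k,n}$, $\phi_{\mathrm{U},k,n}^{\mathrm{off}}$, $\phi_{\mathrm{U},k,n}^{\mathrm{down}}$) is precisely the missing reasoning that turns that remark into the stated bounds, including the correct case bookkeeping of which maps are active.
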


\subsection{UAV Trajectory Design With Fixed Computation Resource Scheduling and Bandwidth Allocation}\label{UAV_trajectory_design}
Here, the sub-problem for designing the UAV's trajectory $\mathbf{u}$ is considered, which we refer to it as the UAV trajectory design problem (P1.3), by assuming that the computation resource scheduling  $\mathbf{z}$  and bandwidth allocation $\mathbf{B}$ are given as fixed with the previously optimized values. Hence, the  UAV trajectory design problem (P1.3) can be rewritten as
\begin{subeqnarray}\label{eq:UAV}
 ({\rm P1.3}):
 \underset{\mathbf{u}}{\min} &&\hspace{-4mm}
  \sum\limits_{n{\rm{ = }}1}^N\left(w_\mathrm{U}E{_{\mathrm{U}}^{(3)}}[n]+\sum\limits_{k{\rm{ = }}1}^Kw_kE{_k^{\mathrm{off}}}[n]\right)\quad\quad \slabel{eq:UAV_0}\\
\mathrm{s.t.}
&&\hspace{-4mm} \eqref{eq:WSECM1_9}, \eqref{eq:WSECM1_10}, \slabel{eq:UAV_1}
\end{subeqnarray}
where $E{_{\mathrm{U}}^{(3)}}[n]=E{_\mathrm{U}^{\mathrm{fly}}}[n]+\sum\limits_{k{\rm{ = }}1}^K \Big(E{_{\mathrm{U},k}^{\mathrm{off}}}[n]+E{_{\mathrm{U},k}^{\mathrm{down}}}[n]\Big)$.
It is noted that the $E{_\mathrm{U}^{\mathrm{fly}}}[n]$ defined in \eqref{eq:E_fly} with $v[n]$ in \eqref{eq:speed_n} is not a convex function of $\mathbf{u}$. In order to address this issue, we define an upper bound of $E{_\mathrm{U}^{\mathrm{fly}}}[n]$ as follows
\begin{align}\label{eq:E_fly_up}
\widetilde{E}{_\mathrm{U}^{\mathrm{fly}}}[n]=\tau\bigg(\theta_1v^3[n]+\frac{\theta_2}{\widetilde{v}[n]}\bigg), \ n\in \mathcal{N},
\end{align}
by introducing a variable $\widetilde{v}[n]$ and a constraint $v[n]\geq \widetilde{v}[n]$, which is equivalent to $\|\mathbf{u}[n]-\mathbf{u}[n-1]\|^2\geq \widetilde{v}^2[n]\tau^2$.
This constraint  is still non-convex, and we leverage the SCA technique to solve this issue.
The left hand side of the constraint is convex versus $\mathbf{u}$ and can be  approximated as its linear lower bound by using the first-order Taylor expansion at a local point $\mathbf{u}_i$, where $i=1,2,\dots$ denotes the iteration index of the SCA method. Hence, the additional constraint can be approximated as a convex one as follows
\begin{align}\label{eq:speed_n_app}
&\widetilde{v}^2[n]\tau^2-2(\mathbf{u}_i[n]-\mathbf{u}_i[n-1])^T(\mathbf{u}[n]-\mathbf{u}[n-1]) \\ \nonumber
&\leq \|\mathbf{u}_i[n]-\mathbf{u}_i[n-1]\|^2, \ n\in \mathcal{N}.
\end{align}
The approximated problem of (P1.3) with $\{\widetilde{E}{_\mathrm{U}^{\mathrm{fly}}}[n]\}$, $\{\widetilde{v}[n]\}$ and the additional constraint \eqref{eq:speed_n_app} is convex w.r.t.~$\mathbf{u}$ and $\{\widetilde{v}[n]\}$. However, the UAV's locations in different slots are coupled with each other as in \eqref{eq:WSECM1_10}, and thus it is difficult to obtain a closed-form solution of $\mathbf{u}$. In this case, we resort to the software CVX \cite{M_Grant08CVX} to solve the approximated problem of (P1.3).

\subsection{Algorithm, Convergence and Complexity}
Based on the aforementioned analysis of the alternating optimization for the computation resource scheduling $\mathbf{z}$, the bandwidth allocation $\mathbf{B}$ and the UAV trajectory $\mathbf{u}$ in each subproblem, Algorithm~\ref{algorithmic1} is proposed to solve the original problem (P1) for obtaining the solution $\{\mathbf{z}^*,\mathbf{B}^*,\mathbf{u}^*\}$.\footnote{The proposed method is not theoretically optimal due to problem non-convexity, but its performance gain is verified by the simulation results.} 

\begin{algorithm}[!htp]
\caption{Three-Step Algorithm for Solving Problem (P1)  }\label{algorithmic1}
\begin{algorithmic}[1]
\STATE \textbf{Set} $B$, $T$, $N$, $K$,  $h_0$, $N_0$, $H$, $V_{\mathrm{max}}$, $\theta_1$, $\theta_2$, $\mathbf{u}_\mathrm{I}$,  $\mathbf{u}_\mathrm{F}$, $w_\mathrm{U}$, $\kappa_\mathrm{U}$, $\{\mathbf{v}_k, w_k, I_k, C_k, O_k, \kappa_k\}_{k\in\mathcal{K}}$,  two tolerant thresholds $\epsilon_1$ and $\epsilon$, and the iterative steps  $\{\varepsilon^{(\lambda)}_j\}$ and $\{\varepsilon^{(\mu)}_j\}$;
\STATE \textbf{Initialize} the iteration index $\zeta=1$ and $\mathbf{u}_{1}$, $\mathbf{B}_{1}$;
\STATE \textbf{Repeat 1}
\STATE~~\textbf{Initialize}  $j=1$, as well as $\boldsymbol{\lambda}_1$, $\boldsymbol{\mu}_1$;
\STATE~~\textbf{Step 1: Repeat 1.1}
\STATE
      ~\quad a) Obtain  $\boldsymbol{\eta}_j$, $\boldsymbol{\rho}_{j}$, $\boldsymbol{\beta}_j$ with $\boldsymbol{\lambda}_j$, $\boldsymbol{\mu}_j$ through \textbf{Lemma }\ref{lemma2}; \\
      ~\quad b) Obtain  $\mathbf{z}_{\zeta,j}^*=\big\{\{f{_{k,j}^*}[n]\}, \{l{_{k,j}^*}[n]\}, \{f{_{\mathrm{U},k,j}^*}[n]\},$\\
      ~~~~~~~$\{l{_{\mathrm{U},k,j}^{\mathrm{off}*}}[n]\}, \{l{_{\mathrm{U},k,j}^{\mathrm{down}*}}[n]\}\big\}$ through \textbf{Theorem} \ref{theorem_1} with \\
      ~~~~~~~$\boldsymbol{\lambda}_j$, $\boldsymbol{\mu}_j$, $\boldsymbol{\eta}_j$, $\boldsymbol{\rho}_{j}$, $\boldsymbol{\beta}_j$ and $\mathbf{u}_{\zeta}$, $\mathbf{B}_{\zeta}$; \\
      ~\quad c) Calculate the WSEC $E{_j^{(1)}}$ by substituting $\mathbf{z}_{\zeta,j}^*$, $\mathbf{B}_{\zeta}$,\\
      ~~~~~~~$\mathbf{u}_{\zeta}$ into the objective function of problem (P1.1); \\
      ~\quad d) $j=j+1$; \\
      ~\quad e) Update $\boldsymbol{\lambda}_j$ and $\boldsymbol{\mu}_j$ according to \textbf{Lemma }\ref{lemma1}; \\
\STATE~~\textbf{End Repeat 1.1} until convergence, i.e., $|E{_j^{(1)}}-E{_{j-1}^{(1)}}|<$ \\
      ~~$\epsilon_1$ ($j>1$), and obtain optimal $\mathbf{z}_{\zeta+1}=\mathbf{z}_{\zeta,j}^*$;\\
\STATE~~\textbf{Step 2: Bi-section search} of $\{\nu_{k,n}\}$ to find the optimal \\
 ~~$\{\nu_{k,n}^*\}$ and obtain the $\mathbf{B}_{\zeta+1}=\mathbf{B}_{\zeta}^*=\big\{\{B{_{k}^{\mathrm{off}*}}[n]\},$ \\ ~~$\{B{_{\mathrm{U},k}^{\mathrm{off}*}}[n]\},\{B{_{\mathrm{U},k}^{\mathrm{down}*}}[n]\}\big\}$ according to \textbf{Theorem} \ref{theorem_2}, \\
 ~~\textbf{Lemma} \ref{lemma3}  and \textbf{Lemma} \ref{lemma4} with given $\mathbf{u}_{\zeta}$ and $\mathbf{z}_{\zeta+1}$;
\STATE ~~\textbf{Step 3:} Solve the approximated problem of (P1.3)  by\\
  ~~CVX based on the SCA method, so as to obtain the \\
  ~~optimal solution $\mathbf{u}_{\zeta+1}$ with the given $\mathbf{z}_{\zeta+1}$, $\mathbf{B}_{\zeta+1}$; \\
\STATE $\zeta=\zeta+1$; \\
\STATE Calculate the WSEC $E{_\zeta}$, by substituting
      $\mathbf{z}_{\zeta}$, $\mathbf{B}_{\zeta}$, and $\mathbf{u}_{\zeta}$ into the objective function of problem (P1); \\
\STATE\textbf{End Repeat 1} until convergence, i.e., $|E{_\zeta}-E{_{\zeta-1}}|<\epsilon$ ($\zeta>2$), and obtain the minimum WSEC $E{_\zeta}$ with the solution $\mathbf{z}^*=\mathbf{z}_{\zeta}$, $\mathbf{B}^*=\mathbf{B}_{\zeta}$, $\mathbf{u}^*=\mathbf{u}_{\zeta}$; \\
\end{algorithmic}
\end{algorithm}

The convergence of Algorithm \ref{algorithmic1} is easy to prove in light of the guaranteed convergence of the loop Repeat 1.1 in Step 1, the bi-section search in Step 2 and the CVX solving process based on the SCA method in Step 3 \cite{B_Boyd04Convex}.
The lower-bounded objective function of problem (P1) will monotonically decrease with the iteration index $\zeta$ by  optimizing $\mathbf{z}$, $\mathbf{B}$ and $\mathbf{u}$ alternatingly in each sub-problem, which further guarantees the convergence of the algorithm.

\setlength{\tabcolsep}{0.3 pt}\begin{table*}[thb]
\centering
\caption{Simulation Parameters}\label{table1}
\begin{tabular}{|l|l|l|}
\hline
~\textbf{Parameter }&~{\textbf{Symbol}} &~{\textbf{Value}} \\
\hline
~The total system bandwidth\quad\quad\quad\quad\quad\quad\quad\quad\quad\quad\quad\quad\quad\quad~   &~$B$ \quad\quad\quad\quad\quad\quad\quad &~30 MHz \quad\quad\quad\quad\quad\quad\quad\quad\quad\quad\quad\quad \\
\hline
~The total  task completion time  &~$T$ &~10 seconds\\
\hline
~Number of time slots &~$N$ &~50  \\
\hline
~Number of ground UEs &~$K$ &~4  \\
\hline
~The channel power gain at a reference distance of $d_0$=1 m &~$h_0$ &~$-30\mathrm{dB}$ \\
\hline
~The noise power  &~$N_0$ &~$-60$dBm \\
\hline
~The fixed altitude of the UAV&~$H$ &~10 m \\
\hline
~The maximum available speed of the UAV&~$V_{\mathrm{max}}$ &~10 m/s \\
\hline
~The UAV's propulsion energy consumption related parameters&~($\theta_1,\theta_2$) &~(0.00614,15.976)\\
\hline
~The initial and final position of the UAV  &~$\mathbf{u}_\mathrm{I}$, $\mathbf{u}_\mathrm{F}$ &~$(-5,-5)$, $(5,-5)$\\
\hline
~The horizontal positions of the UEs  &~$\mathbf{v}_1$, $\mathbf{v}_2$, $\mathbf{v}_3$, $\mathbf{v}_4$  &~$(5,5)$, $(-5,5)$, $(-5,-5)$, $(5,-5)$\\
\hline
~The effective switched capacitance of the UAV and UEs &~$\kappa_\mathrm{U}$, $\kappa_k (k\in\mathcal{K})$ &~$10^{-28}$ \\
\hline
~The weight for energy consumption of the UAV &~$w_\mathrm{U}$ &~$0.2$ \\
\hline
~The weight for energy consumption of the UEs&~$w_k~(k\in\mathcal{K})$ &~$1$ \\
\hline
~Required CPU cycles per bit &~$C_k~(k\in\mathcal{K})$ &~1000 cycles/bit \\
\hline
~UEs' task-input data size &~$I_k~(k\in\mathcal{K})$ &~400 Mbits \\
\hline
~UEs' task size ratio of output data to input data&~$O_k~(k\in\mathcal{K})$ &~$0.8$ \\
\hline
~The tolerant thresholds &~$\epsilon_1$ and $\epsilon$ &~$10^{-4}$  \\
\hline
\end{tabular}
\end{table*}

In addition, Algorithm~\ref{algorithmic1} is easy to implement and the corresponding  complexity is acceptable. In Step 1, the complexity mainly comes from the subgradient method for obtaining $\{\lambda_{k,n}\}$, $\{\mu_{k,n}\}$,   and the bi-section searches of $\{\beta_{k}\}$, $\{\rho_{k}\}$ and $\{\eta_{k}\}$ in each iteration of Repeat 1.1. Let $\varepsilon_{\mathrm{sub}}>0$, and $\varepsilon_{\beta}, \varepsilon_{\rho}, \varepsilon_{\eta}>0$ denote the computational accuracies of the subgradient method and the bisectional searches for  $\{\beta_{k}\}$, $\{\rho_{k}\}$ and $\{\eta_{k}\}$. Thus, the corresponding complexity can be calculated as $\mathcal{O}(1/\varepsilon_{\mathrm{sub}}^2+K\log_2(1/\varepsilon_{\beta})(\log_2(1/\varepsilon_{\rho})+\log_2(1/\varepsilon_{\eta})) )$. In Step 2, the complexity is from the bisection search of $\{\nu_{k,n}\}$, which is calculated as $\mathcal{O}(KN\log_2(1/\varepsilon_{\nu}))$, where $\varepsilon_{\nu}$ is the corresponding computational accuracy.
In Step 3, the complexity mainly focuses on solving the approximation problem of (P1.3) by CVX, which is acceptable in general.

\section{Simulation Results}\label{sec:simulation}

In this section, simulation results are presented to evaluate the performance of the proposed algorithm against the benchmarking schemes. The effects of the key parameters will be analyzed, including the relative location of the AP ($\mathbf{v}_0$),\footnote{In order to properly show the effects of the relative location of the AP to UEs on UAV's trajectory and the performance, we fix the locations of the UEs and vary the location  of AP even though AP is usually fixed in practice.} the computation task sizes of UEs ($I_k$ for $k\in\mathcal{K}$), the task completion time for UEs ($T$), the size ratio of task-output data to task-input data ($O_k$ for $k\in\mathcal{K}$), the weight for energy consumption of the UAV ($w_\mathrm{U}$), and the iteration index of the alternating optimization algorithm ($\zeta$).
The basic simulation parameters are listed in Table~\ref{table1} unless specified otherwise.

\subsection{Trajectory of the UAV}\label{sec:Trajectory}
In this subsection, numerical results for the trajectory of the UAV are given to shed light on the effects of  the task sizes of UEs
($[I_1,I_2,I_3,I_4]$) and the relative location of the AP ($\mathbf{v}_0$).
In Fig. \ref{Trajectory}, the UAV's flying trajectories are depicted in different scenarios.
It should be noted that the total task size of UEs is same for the cases in (a), (c), (d) and (f), i.e., 1400 Mbits, while the cases for (b) and (e) are with larger total task size, e.g., 1800 Mbits. From these results in Fig. \ref{Trajectory}, we can observe that the trajectory of the UAV is heavily reliant on the relative location of the AP and the distribution of UEs' task sizes.

\begin{figure}[tbp]
\centering
\includegraphics[scale=0.45]{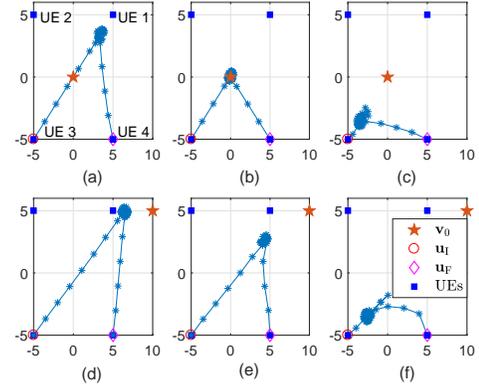}
\caption{The trajectories of the UAV in the situations with \mbox{different} location of the AP and task size allocation of the UEs:   $\mathbf{v}_0=(0,0)$ for (a), (b) and (c), $\mathbf{v}_0=(10,5)$ for (d), (e) and (f);  $[I_1,I_2,I_3,I_4]=[6,2,4,2]\times10^2$Mbits for (a) and (d), $[I_1,I_2,I_3,I_4]=[6,4,6,2]\times10^2$Mbits for (b) and (e), $[I_1,I_2,I_3,I_4]=[2,2,6,4]\times10^2$Mbits for (c) and (f).}
\label{Trajectory}
\end{figure}

For the scenario of $\mathbf{v}_0=(0,0)$, the AP is surrounded by the UEs and at the center of the UEs' distributed area. We can observe that the UAV tends to fly close to the UEs with large task sizes and tries to be not too far away from the AP when the total task sizes of UEs are moderate as the results in cases (a) and (c). When the total task size becomes larger and the distribution of UEs' task sizes becomes  more average, the UAV tends to fly close to the AP as the result in case (b).
These three cases  indicate that for the scenario where the AP is located at the center of UEs' distributed area, the distribution of the UEs' task sizes plays an important role on the UAV's trajectory, while the effect of the AP's location will become more dominant when the UEs' total task size becomes larger, which coincides with the intuition that more task-input data will be offloaded to the AP in this situation so as to reduce the WSEC by making use of the super computing resources at the AP. For the scenario of $\mathbf{v}_0=(10,5)$, the AP is located outside the distributed area of the UEs and its average distance to the UEs is relatively larger than the above scenario.
In this situation, the effects of AP's location on the trajectories are more prominent, where the comparison between (a) and (d), (b) and (e), (c) and (f) can properly explain this.

The reason behind these results in Fig.~\ref{Trajectory} is that there exists a tradeoff between the distribution of UEs' task sizes and the relative location of the AP to the UEs. In other words, getting close to the UEs with large task sizes can reduce UEs' offloading and UAV's downloading energy consumption, while being closer to the AP will reduce the UAV's offloading energy consumption, and thus the UAV has to find a balance between these two factors meanwhile taking its own flying energy consumption into consideration, so as to minimize the WSEC through optimizing its flying trajectory.

\subsection{Performance Improvement}\label{sec:Performance}
Here, we focus on the performance gain of the proposed algorithm. The performance of the baselines is also provided for comparison, including
the ``Direct Trajectory" scheme where the UAV flies from its initial location to the final location directly with an average speed;
the ``Offloading Only" scheme where the UEs just rely on task offloading to the UAV and the AP for computing without local computing by the UEs themselves; the ``Equal Bandwidth" scheme indicating the solution that the whole bandwidth are equally divided  by the active $B{_k^{\mathrm{off}}}[n]$, $ B{_{\mathrm{U},k}^{\mathrm{off}}}[n]$, and $B{_{\mathrm{U},k}^{\mathrm{down}}}[n]$, for $n\in\mathcal{N}$ and $k\in\mathcal{K}$ without bandwidth optimization;
and the ``Local Computing" scheme, where the UEs rely on their own computing resources to complete their computation tasks without offloading. Note that the former four schemes are all offloading schemes. To better illustrate the effects of AP's relative location on the performance, we present all the results in two scenarios given in Fig.~\ref{Trajectory}, i.e., $\mathbf{v}_0=(0,0)$ and $\mathbf{v}_0=(10,5)$.

\begin{figure}[tbp]
\centering
\includegraphics[scale=0.45]{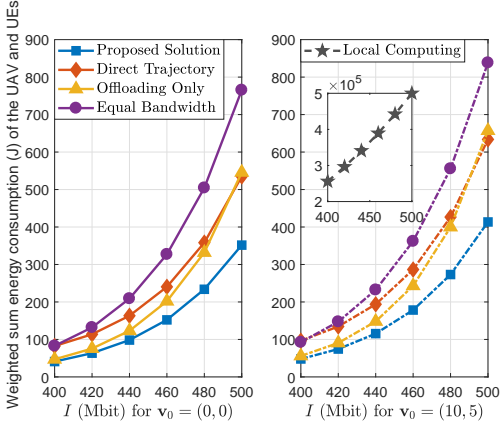}
\caption{The WSEC of the UAV and UEs versus the uniform task size: $I=I_k$ for $k\in\mathcal{K}$.}
\label{Energy_I}
\end{figure}

Fig.~\ref{Energy_I} shows the WSEC results versus the uniform task size $I=I_k$ for $k\in\mathcal{K}$. All the curves in the figures increase with $I$ as expected since more energy will be consumed by completing tasks with more input data. It can be seen that great performance improvement can be achieved by leveraging the proposed solution in comparison with all the baseline  schemes in both scenarios. It is clear that the performance of the ``Local Computing" scheme is far worse that the other schemes with computation offloading, verifying the importance of edge computing through offloading.
Specifically, the WSECs of the ``Proposed Solution" are almost one thousandth of that for the ``Local Computing" scheme, presenting the tremendous benefits the UEs obtained by deploying the UAV as an assistant for computing and  relaying.
In addition,  the WSECs of the proposed solution are half less than those of the ``Equal Bandwidth" scheme and they are almost quarter less than those of the ``Direct Trajectory" scheme. The ``Offloading Only" scheme performs well with relatively small task sizes, e.g., $I=400$ Mbits, but its gaps between the ``Proposed Solution" are even larger than those of the ``Direct Trajectory" scheme when task sizes are large, e.g., $I=500$ Mbits. All these results verify that the proposed optimization on bandwidth allocation and UAV's trajectory, as well as making full use of the computing resources at UEs  have great effect on  minimizing the WSEC of the UAV and UEs.
Note that the gaps between the proposed solution and the baselines become larger when $I$ increases, which further indicates that the proposed algorithm is more capable of handling the computation-intensive tasks.

\begin{figure}[tbp]
\centering
\includegraphics[scale=0.45]{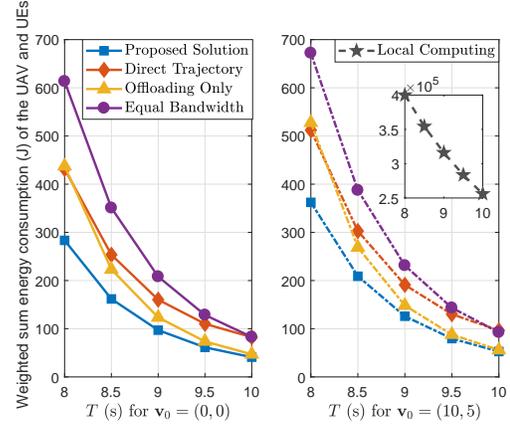}
\caption{The WSEC  of the UAV and UEs versus the total task completion time: $T$ (s).}
\label{Energy_T}
\end{figure}

In Fig.~\ref{Energy_T}, the WSEC  w.r.t.~the total task completion time $T$ is depicted.
We can see that the WSECs of all the schemes decrease with $T$, coinciding with the intuition that a tradeoff exists between the energy consumption and time consumption for completing the same tasks, and the energy consumption will decrease when the consumed time increases.
It is notable that the proposed solution is superior than the four baseline schemes in both scenarios, and the performance improvement is even more prominent with strict time restriction (small $T$), which further confirms that the proposed algorithm is good at dealing with the latency-critical computation tasks and can achieve a better energy-delay tradeoff. Besides, some similar insights can also be obtained as from Fig.~\ref{Energy_I}.

\begin{figure}[tbp]
\centering
\includegraphics[scale=0.45]{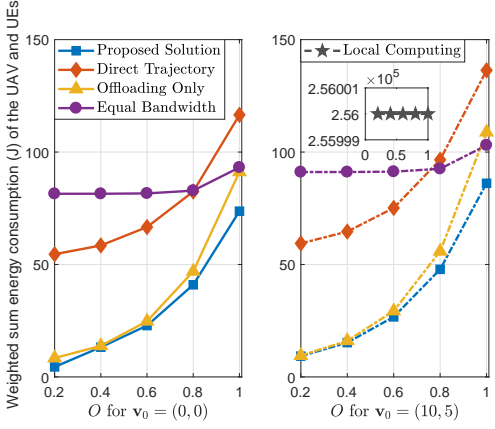}
\caption{The WSEC of the UAV and UEs versus the uniform size ratio of task-output data to task-input data: $O=O_k$ for $k\in\mathcal{K}$.}
\label{Energy_Ok}
\end{figure}

Fig.~\ref{Energy_Ok} depicts the WSEC w.r.t.~the uniform size ratio of the task-output data to the task-input data $O=O_k$ for $k\in\mathcal{K}$.
We see that the proposed scheme outperforms the baselines in both scenarios as in Fig.~\ref{Energy_I} and Fig.~\ref{Energy_T}.
The WSEC of the ``Local Computing" scheme is constant w.r.t $O$, while
the WSECs of all the other schemes increase with $O$ since more output data will be downloaded to the UEs in the cases with larger $O$. However, the curves of the ``Equal Bandwidth" scheme are almost unchanged for $O\in[0.2,0.8]$ due to the fact that  equally allocated bandwidth to the downloading transmissions should be sufficient to complete the downloading missions,
and its performance is much worse than the other offloading schemes for smaller $O$ because of the irrational bandwidth allocation.
Note that the gaps between the proposed solution and the ``Direct Trajectory" scheme decrease as $O$ increases since it becomes more difficult to balance the tradeoff between UEs' task sizes and the relative location of the AP.
In comparison, the gaps between the proposed solution and the ``Offloading Only" scheme become large as $O$ increases for the  reason that local computing may be an energy-saving way when with a large $O$. In the scenario of $\mathbf{v}_0=(10,5)$, the ``Offloading Only" scheme performs even worse than the ``Equal Bandwidth" scheme when $O=1$, which further verifies that the effect of partial local computing in minimizing the WSEC.

\begin{figure}[tbp]
\centering
\includegraphics[scale=0.45]{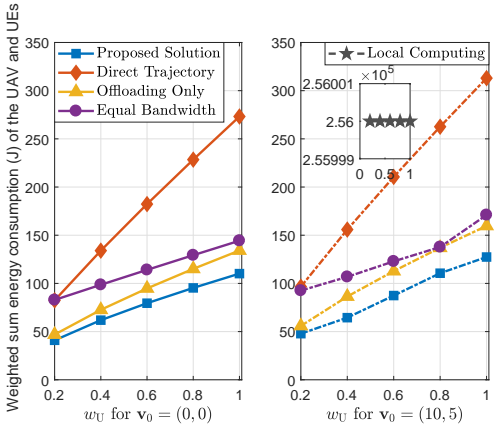}
\caption{The WSEC of the UAV and UEs versus the weight for energy consumption of the UAV: $w_\mathrm{U}$.}
\label{Energy_wU}
\end{figure}

\begin{figure}[tbp]
\centering
\includegraphics[scale=0.45]{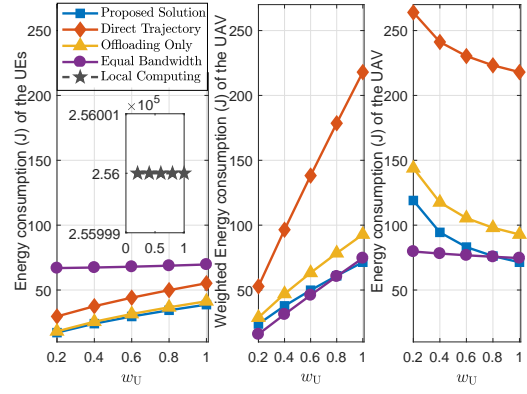}
\caption{Separate energy consumption of the UEs and the UAV versus the weight for energy consumption of the UAV: $w_\mathrm{U}$.}
\label{3Energy_wU}
\end{figure}

Results for the WSEC versus the UAV's weight  $w_\mathrm{U}$ are shown in Fig.~\ref{Energy_wU}. It is clear that the proposed scheme still performs best in both scenarios.  All the curves increase with $w_\mathrm{U}$ except that for ``Local Computing" scheme, since larger proportion of  UAV's energy consumption will be calculated into the WSEC with a larger $w_\mathrm{U}$.
Note that the gaps between the proposed solution and the ``Direct Trajectory"  scheme become obviously larger as $w_\mathrm{U}$ increases in both scenarios especially compared with those gaps related to the ``Offloading Only" and the ``Equal Bandwidth" schemes.
This is due to the fact that the energy consumption for UAV's propulsion contributes a larger part for WSEC of the ``Direct Trajectory" scheme without trajectory optimization, and thus its WSEC  increases  much faster  w.r.t.~$w_\mathrm{U}$ than the other schemes.

From the above results, we can observe that the WSEC for the scenario of $\mathbf{v}_0=(10,5)$ is higher than that for the scenario of $\mathbf{v}_0=(0,0)$ for all the schemes. It is easy to understand that more energy will be used for UAV's offloading transmission and flying because of the farther average distance between the AP and UEs. The performance of the proposed scheme is also more stable than that of the baseline schemes  considering the changing of the relative location of the AP to UEs since its relative WSEC increment is the smallest among the schemes.

Based on Fig.~\ref{Energy_wU}, we depict the energy consumption of the UEs (also the weighted energy consumption of the UEs with $w_1=w_2=w_3=w_4=1$), the weighted energy consumption and the energy consumption of the UAV versus $w_\mathrm{U}$ in Fig.~\ref{3Energy_wU} (a), (b) and (c), respectively. It is clear that the weighted energy consumption of the UEs and the UAV for the four offloading schemes increase with $w_\mathrm{U}$ as in (a) and (b), while their energy consumption of the UAV decreases  with $w_\mathrm{U}$ as in (c).
This is due to the fact that we aim at minimizing the WSEC, and the objectives increase with $w_\mathrm{U}$ similar to the results in Fig.~\ref{Energy_wU}. Meanwhile minimizing the UAV's energy consumption becomes more important as $w_\mathrm{U}$ increases.
From this figure, we can better see the tremendous benefits obtained by the UEs from the UAV, especially when $w_\mathrm{U}$ is smaller. In the case of $w_\mathrm{U}=0.2$, the UAV consumes 120 Joule of energy to help the UEs decrease their energy consumption from $2.56*10^5$ Joule of the ``Local Computing" scheme to 20 Joule of the ``Proposed Solution", by providing assistance of task computing and relaying (further offloading to the AP for computing) through the proposed algorithm.

\begin{figure}[tbp]
\centering
\includegraphics[scale=0.44]{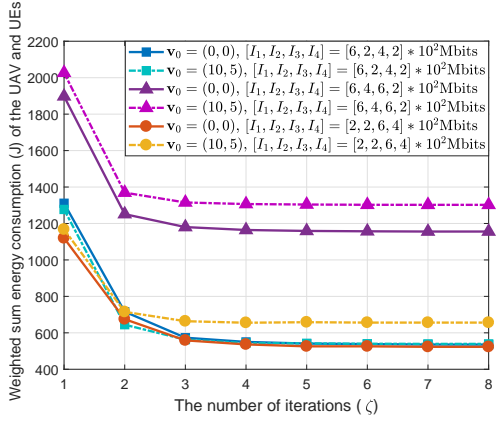}
\caption{The WSEC of the UAV and UEs versus the number of iteration: $\zeta$.}
\label{Energy_zeta}
\end{figure}

Fig. \ref{Energy_zeta} shows the WSEC of the proposed solution w.r.t to the iteration index $\zeta$ under different settings. From the figure, we can see that the proposed solution almost converges at $\zeta=3$, i.e., after twice iteration of optimizing $\mathbf{z}$, $\mathbf{B}$ and $\mathbf{u}$, regardless of the UEs' task sizes or the position of the AP.

\section{Conclusion}\label{sec:conclusion}
This paper investigated the UAV-assisted MEC architecture, where the UAV acts as an MEC server and a relay to assist the UEs to compute their tasks or further offload their tasks to the AP for computing. We minimized the WSEC of the UAV and the UEs under some practical constraints, using an alternating algorithm iteratively optimizing the computation resource scheduling, bandwidth allocation, and the UAV's trajectory. The simulation results have confirmed that the UAV's trajectory is greatly affected by the relative location of the AP and the distribution of UEs' task sizes. Besides, significant performance improvement and more stable performance can be achieved by  the proposed algorithm over the baseline schemes.


\section*{Appendix A: Proof of {Theorem}~\ref{theorem_1}}
\label{App:theo_1}
\renewcommand{\theequation}{A.\arabic{equation}}
\setcounter{equation}{0}

The partial Lagrange function of (P1.1) can be expressed as
\vspace{-5mm}

{\small{
\begin{align}\label{eq:L_P1_1}
 &\mathcal{L}^{(1)}(\mathbf{z},\boldsymbol{\lambda},\boldsymbol{\mu},\boldsymbol{\eta},\boldsymbol{\rho},\boldsymbol{\beta})=
  \sum\limits_{k{\rm{ = }}1}^K \Bigg\{\sum\limits_{n{\rm{ = }}1}^N \bigg(w_k\Big(E{_k^{\mathrm{local}}}[n]+E{_k^{\mathrm{off}}}[n]\Big) \nonumber \\
 & +w_\mathrm{U}\Big(E_{\mathrm{U},k}[n]+E{_{\mathrm{U},k}^{\mathrm{off}}}[n]+E{_{\mathrm{U},k}^{\mathrm{down}}}[n]\Big)\bigg) \nonumber \\
 &  +\Bigg(\sum\limits_{n{\rm{ = }}2}^{N-1} \widetilde{\lambda}_{k,n} \bigg(\frac{\delta f_{\mathrm{U},k}[n]}{C_k}+l{_{\mathrm{U},k}^{\mathrm{off}}}[n]\bigg)- \sum\limits_{n{\rm{ = }}1}^{N-2} \widehat{\lambda}_{k,n}  l{_{k}}[n]\Bigg) \nonumber \\
 &  +\Bigg( \sum\limits_{n{\rm{ = }}3}^N \widetilde{\mu}_{k,n} l{_{\mathrm{U},k}^{\mathrm{down}}}[n]- O_k \sum\limits_{n{\rm{ = }}2}^{N-1}  \widehat{\mu}_{k,n}  \left(\frac{\delta f_{\mathrm{U},k}[n]}{C_k} +l{_{\mathrm{U},k}^{\mathrm{off}}}[n]\right)\Bigg) \nonumber \\
 &  +\eta_{k}\Bigg(\sum\limits_{n{\rm{ = }}1}^{N-2} l{_{k}}[n]-
  \sum\limits_{n{\rm{ = }}2}^{N-1} \bigg(\frac{\delta f_{\mathrm{U},k}[n]}{C_k}+l{_{\mathrm{U},k}^{\mathrm{off}}}[n]\bigg)\Bigg) \nonumber \\
 &  +\rho_{k} \Bigg(O_k\sum\limits_{n{\rm{ = }}2}^{N-1} \bigg(\frac{\delta f_{\mathrm{U},k}[n]}{C_k} +l{_{\mathrm{U},k}^{\mathrm{off}}}[n]\bigg)-\sum\limits_{n{\rm{ = }}3}^N l{_{\mathrm{U},k}^{\mathrm{down}}}[n]\Bigg) \nonumber \\
 & + \beta_k  \Bigg(I_k -\sum\limits_{n{\rm{ = }}1}^{N-2} l{_{k}}[n] -\sum\limits_{n{\rm{ = }}1}^{N}\frac{\tau}{C_k}f_{k}[n]\Bigg)\Bigg\},
\end{align}
}}
\hspace{-1.5mm}where  $\boldsymbol{\lambda}=\{\lambda_{k,n}\}_{k\in\mathcal{K}, n\in\mathcal{N}}$, $\boldsymbol{\mu}=\{\mu_{k,n}\}_{k\in\mathcal{K}, n\in\mathcal{N}}$, $\boldsymbol{\eta}=\{\eta_k\}_{k\in\mathcal{K}}$, $\boldsymbol{\rho}=\{\rho_k\}_{k\in\mathcal{K}}$, $\boldsymbol{\beta}=\{\beta_k\}_{k\in\mathcal{K}}$, $\widetilde{\lambda}_{k,n}=\sum_{i{\rm{ = }}n}^{N-1} \lambda_{k,i}$, $\widehat{\lambda}_{k,n}=\sum_{i{\rm{ = }}{n+1}}^{N-1} \lambda_{k,i}$, $\widetilde{\mu}_{k,n}=\sum_{i{\rm{ = }}n}^N \mu_{k,i}$, and $\widehat{\mu}_{k,n}=\sum_{i{\rm{ = }}{n+1}}^N \mu_{k,i}$.
The Lagrangian dual function of problem (P1.1) can be presented as
\begin{align}\label{Dual_fun_P1_1}
d^{(1)}(\boldsymbol{\lambda},\boldsymbol{\mu},\boldsymbol{\eta},\boldsymbol{\rho},\boldsymbol{\beta})=&~\underset{\mathbf{z}}{\min}~ ~\mathcal{L}^{(1)}(\mathbf{z},\boldsymbol{\lambda},\boldsymbol{\mu},\boldsymbol{\eta},\boldsymbol{\rho},\boldsymbol{\beta})\\
&~~\mathrm{s.t.} ~~ \eqref{eq:WSECM1_6}-\eqref{eq:WSECM1_8}.    \nonumber
\end{align}
Hence, the solution of $\mathbf{z}$ with given dual variables $\boldsymbol{\lambda},\boldsymbol{\mu},\boldsymbol{\eta},\boldsymbol{\rho},\boldsymbol{\beta}$ can be obtained by solving problem \eqref{Dual_fun_P1_1}. If the given dual variables are optimal, denoted as $\boldsymbol{\lambda}^*,\boldsymbol{\mu}^*,\boldsymbol{\eta}^*,\boldsymbol{\rho}^*,\boldsymbol{\beta}^*$, then the corresponding solutions are optimal, i.e., $\mathbf{z}^*$.
According to the structures of $\mathcal{L}^{(1)}(\mathbf{z},\boldsymbol{\lambda},\boldsymbol{\mu},\boldsymbol{\eta},\boldsymbol{\rho},\boldsymbol{\beta})$
and the constraints \eqref{eq:WSECM1_6}-\eqref{eq:WSECM1_8}, it is noted that the problem \eqref {Dual_fun_P1_1} can be equivalently divided into $K$ subproblems w.r.t.~each UE $k\in\mathcal{K}$ to facilitate parallel execution. Apply the Karush-Kuhn-Tucker (KKT) conditions \cite{B_Boyd04Convex} and let the derivations of $\mathcal{L}^{(1)}(\mathbf{z},\boldsymbol{\lambda},\boldsymbol{\mu},\boldsymbol{\eta},\boldsymbol{\rho},\boldsymbol{\beta})$ w.r.t.  $f{_{k}}[n], l{_{k}}[n], f{_{\mathrm{U},k}}[n], l{_{\mathrm{U},k}^{\mathrm{off}}}[n], l{_{\mathrm{U},k}^{\mathrm{down}}}[n]$ equal to zero, we can thus obtain the corresponding optimal solution given in Theorem \ref{theorem_1} with some straightforward calculations.

\section*{Appendix B: Proof of {Lemma}~\ref{lemma2}}
\label{App:lemma_2}
\renewcommand{\theequation}{B.\arabic{equation}}
\setcounter{equation}{0}

With the achieved $\boldsymbol{\lambda}_{j+1}$ and $\boldsymbol{\mu}_{j+1}$ in Lemma \ref{lemma1}, we can then obtain the $\boldsymbol{\eta}_{j+1}$, $\boldsymbol{\rho}_{j+1}$ and $\boldsymbol{\beta}_{j+1}$  correspondingly. According to the expressions of the optimal solution in Theorem \ref{theorem_1} and  the equality constraints in \eqref{eq:WSECM1_4}--\eqref{eq:WSECM1_3},
we can express the value of $\sum_{n{\rm{ = }}1}^{N-2} l{_{k,j+1}^*}[n]$ in the following forms in \eqref{eq:C_7}--\eqref{eq:C_10}
\vspace{-1mm}

{\small{
\begin{align}
&\hspace{-4mm}\quad\sum\limits_{n{\rm{ = }}1}^{N-2} l{_{k,j+1}^*}[n]
=I_k-\frac{T}{C_k}\sqrt{\frac{\beta{_{k,j+1}}}{3C_kw_k\kappa_k}} \label{eq:C_7}\\
&\hspace{-4mm}=\delta\sum\limits_{n{\rm{ = }}1}^{N-2} B{_k^{\mathrm{off}}}[n]\bigg[\varphi_k[n]+\log_2\Big[\widehat{\lambda}_{k,n,j+1}+\beta{_{k,j+1}}-\eta{_{k,j+1}} \Big]^+ \bigg]^+ \label{eq:C_8}\\
\hspace{-4mm}=&\frac{\delta}{O_k}\sum\limits_{n{\rm{ = }}3}^N B{_{\mathrm{U},k}^{\mathrm{down}}}[n]\bigg[\varphi{_{\mathrm{U},k}^{\mathrm{down}}}[n]+\log_2\Big[\rho{_{k,j+1}}-
\widetilde{\mu}_{k,n,j+1} \Big]^+ \bigg]^+  \label{eq:C_9}\\
&\hspace{-4mm}=\sum\limits_{n{\rm{ = }}2}^{N-1} \Bigg\{\frac{\delta}{C_k}
\sqrt{ \frac{ [\eta{_{k,j+1}}-O{_k}\rho{_{k,j+1}}+O_k\widehat{\mu}_{k,n,j+1}-\widetilde{\lambda}_{k,n,j+1}]^+ } {3C_kw_\mathrm{U}\kappa_\mathrm{U}} }  \nonumber\\
&\hspace{-4mm}\quad\quad+\delta B{_{\mathrm{U},k}^{\mathrm{off}}}[n] \bigg[\varphi{_{\mathrm{U},k}^{\mathrm{off}}}[n]+\log_2\Big[\eta{_{k,j+1}}-O{_k}\rho{_{k,j+1}} \nonumber\\
&\hspace{-4mm}\quad\quad+O_k\widehat{\mu}_{k,n,j+1}-\widetilde{\lambda}_{k,n,j+1} \Big]^+ \bigg]^+ \Bigg\}, \label{eq:C_10}
\end{align}
}}
\hspace{-1.5mm}where $\widetilde{\lambda}_{k,n,j+1}$, $\widehat{\lambda}_{k,n,j+1}$, $\widetilde{\mu}_{k,n,j+1}$, and $\widehat{\mu}_{k,n,j+1}$ are defined similar to  $\widetilde{\lambda}_{k,n}$, $\widehat{\lambda}_{k,n}$, $\widetilde{\mu}_{k,n}$, and $\widehat{\mu}_{k,n}$ in Appendix A. The expression \eqref{eq:C_7} is obtained from \eqref{eq:WSECM1_3},
\eqref{eq:C_8} comes from the expression of $\{l{_{k,j+1}^*}[n]\}$, \eqref{eq:C_9} is derived from \eqref{eq:WSECM1_4} and \eqref{eq:WSECM1_5}
with equation $\sum_{n{\rm{ = }}1}^{N-2} l{_{k,j+1}^*}[n]=\frac{1}{O_k}\sum_{n{\rm{ = }}3}^N l{_{\mathrm{U},k,j+1}^{\mathrm{down}*}}[n]$, and \eqref{eq:C_10} is obtained from \eqref{eq:WSECM1_4}.

According to \eqref{eq:C_7} and the facts that $\sum_{n{\rm{ = }}1}^{N-2} l{_{k,j+1}}[n]\in[0,I_k]$, $f{_{k}^*}[n]\geq0$, we can derive the range of $\beta_{k,j+1}\in[0,\beta_{k,\mathrm{max}})$ with $\beta_{k,\mathrm{max}}=3C_kw_k\kappa_k(\frac{I_kC_k}{T})^2$ for $k\in\mathcal{K}$. It is observed from \eqref{eq:C_7}--\eqref{eq:C_9} that $\eta_{k,j+1}$ and $\rho_{k,j+1}$ are respectively monotonic non-decreasing and non-increasing implicit functions of $\beta_{k,j+1}$, which further shows that \eqref{eq:C_10} is also a monotonic non-decreasing function of $\beta_{k,j+1}$. Hence, with the obtained $\boldsymbol{\lambda}_{j+1}$ and $\boldsymbol{\mu}_{j+1}$, and a given $\beta_{k,j+1}\in[0,\beta_{k,\mathrm{max}})$, we can derive the corresponding $\eta_{k,j+1}$ and $\rho_{k,j+1}$  from the equations constituted by \eqref{eq:C_7} in company with \eqref{eq:C_8} and \eqref{eq:C_9}, respectively, also using the bi-section search method with the ranges of $\eta_{k,j+1}\in[\eta_{k,j+1}^{\mathrm{low}},\eta_{k,j+1}^{\mathrm{up}}]$ and $\rho_{k,j+1}\in[\rho_{k,j+1}^{\mathrm{low}},\rho_{k,j+1}^{\mathrm{up}}]$,
where
\begin{align}
&\eta_{k,j+1}^{\mathrm{low}}=\widehat{\lambda}_{k,N-2,j+1}-2^{\frac{I_k/\delta-\sum_{n=1}^{N-2}B{_k^{\mathrm{off}}}[n]\varphi_k[n]}
{\sum_{n=1}^{N-2}B{_k^{\mathrm{off}}}[n]}}, \label{eq:C_11}\\
&\eta_{k,j+1}^{\mathrm{up}}=\widehat{\lambda}_{k,1,j+1}+\beta_{k,\mathrm{max}}, \label{eq:C_12}\\
&\rho_{k,j+1}^{\mathrm{low}}=\widetilde{\mu}_{k,N,j+1}, \label{eq:C_13}\\
&\rho_{k,j+1}^{\mathrm{up}}=\widetilde{\mu}_{k,3,j+1}+2^{\frac{I_kO_k/\delta-\sum_{n=3}^{N}B{_{\mathrm{U},k}^{\mathrm{down}}}[n]\varphi{_{\mathrm{U},k}^{\mathrm{down}}}[n]}
{\sum_{n=3}^{N}B{_{\mathrm{U},k}^{\mathrm{down}}}[n]}}, \label{eq:C_14}
\end{align}
which are obtained from \eqref{eq:C_8} and \eqref{eq:C_9} in combination with the definitions of  $\widehat{\lambda}_{k,n,j+1}$ and $\widetilde{\mu}_{k,n,j+1}$, and the range of $\beta_{k,j+1}$. The optimal $\beta_{k,j+1}$ and the corresponding $\eta_{k,j+1}$, $\rho_{k,j+1}$ should make the equation formed  by \eqref{eq:C_7} and \eqref{eq:C_10} satisfied, which indicates the termination of the bi-section search of $\beta_{k,j+1}$, $k\in\mathcal{K}$.

\section*{Appendix C: Proof of {Theorem}~\ref{theorem_2}}
\label{App:theo_2}
\renewcommand{\theequation}{C.\arabic{equation}}
\setcounter{equation}{0}

The partial Lagrange function of (P1.2) is defined as
\begin{align}\label{eq:L_P1_2}
  &\mathcal{L}^{(2)}(\mathbf{B},\boldsymbol{\nu})= \nonumber\\
  &\sum\limits_{k{\rm{ = }}1}^K \sum\limits_{n{\rm{ = }}1}^N \bigg(w_kE{_k^{\mathrm{off}}}[n]+
  w_\mathrm{U}\Big(E{_{\mathrm{U},k}^{\mathrm{off}}}[n]+E{_{\mathrm{U},k}^{\mathrm{down}}}[n]\Big)\bigg)+ \nonumber \\
  &\sum\limits_{k{\rm{ = }}1}^K\sum\limits_{n{\rm{ = }}1}^N\nu_{k,n}\big(B-B{_k^{\mathrm{off}}}[n]-B{_{\mathrm{U},k}^{\mathrm{off}}}[n]-B{_{\mathrm{U},k}^{\mathrm{down}}}[n]\big),
\end{align}
where $\boldsymbol{\nu}=\{\nu_{k,n}\}_{k\in\mathcal{K},n\in\mathcal{N}}$.
The Lagrangian dual function of problem (P1.2) can be presented as
\begin{align}\label{Dual_fun_P1_2}
d^{(2)}(\boldsymbol{\nu})=~&\underset{\mathbf{B}}{\min}~ \mathcal{L}^{(2)}(\mathbf{B},\boldsymbol{\nu})\\
&~\mathrm{s.t.} ~~ \eqref{eq:WSECM1_13}-\eqref{eq:WSECM1_16}.    \nonumber
\end{align}Hence, the optimal solution of $\mathbf{B}$ with optimal dual variables $\boldsymbol{\nu}^*$ can be obtained by solving  \eqref{Dual_fun_P1_2}. This problem can also be equivalently divided into $K$ subproblems w.r.t.~each UE $k\in\mathcal{K}$
to facilitate parallel execution. It is easy to note that the expressions of $E{_k^{\mathrm{off}}}[n]$, $E{_{\mathrm{U},k}^{\mathrm{off}}}[n]$  and $E{_{\mathrm{U},k}^{\mathrm{down}}}[n]$  have similar structures w.r.t.~$B{_k^{\mathrm{off}}}[n]$, $B{_{\mathrm{U},k}^{\mathrm{off}}}[n]$ and $B{_{\mathrm{U},k}^{\mathrm{down}}}[n]$, and thus the optimal solution of $B{_k^{\mathrm{off}}}[n]$, $B{_{\mathrm{U},k}^{\mathrm{off}}}[n]$ and $B{_{\mathrm{U},k}^{\mathrm{down}}}[n]$ should have similar structures according to problem \eqref{Dual_fun_P1_2}.
Next, we will take $B{_k^{\mathrm{off}}}[n]$  as an example to obtain its closed-form optimal solution versus $\nu_{k,n}^*$ for $k\in\mathcal{K}, n\in\mathcal{N}$.
Applying the KKT conditions  \cite{B_Boyd04Convex} leads to the following necessary and sufficient condition of  $B{_k^{\mathrm{off}*}}[n]$:
\begin{align}\label{L_B_k_off}
\hspace{-2mm}\frac{\partial\mathcal{L}^{(2)}(\mathbf{B},\boldsymbol{\nu})}{\partial B{_k^{\mathrm{off}*}}[n]}=\nu_{k,n}^*-\frac{l_k[n]w_kN_0\ln2}{(B{_k^{\mathrm{off}*}}[n])^2h_k[n]}
2^{\frac{l_k[n]}{B{_k^{\mathrm{off}*}}[n]\delta}}=0,
\end{align}
where the optimal dual variable $\nu_{k,n}^*$ should make sure that the equality constraint  $B{_k^{\mathrm{off}*}}[n]+B{_{\mathrm{U},k}^{\mathrm{off}*}}[n]+B{_{\mathrm{U},k}^{\mathrm{down}*}}[n]=B$ is satisfied.
It is not easy to obtain the closed-form solution of $B{_k^{\mathrm{off}*}}[n]$ through \eqref{L_B_k_off} directly.
By defining $\xi=\frac{l_k[n]}{B{_k^{\mathrm{off}*}}[n]\delta}$, the equation in \eqref{L_B_k_off} can be re-expressed as
\begin{align}\label{Equa_B_k_off}
\xi^22^\xi=\frac{\nu_{k,n}^*h_k[n]l_k[n]}{\delta^2w_kN_0\ln2}\triangleq\Gamma.
\end{align}
By applying the natural logarithm at the both sides of \eqref{Equa_B_k_off} leads to
\begin{align}\label{ln_B_k_off}
\ln\xi+\frac{\ln2}{2}\xi=\ln\Gamma^{\frac{1}{2}}.
\end{align}
Then applying the exponential operation at both sides of \eqref{ln_B_k_off}, we can obtain that
\begin{align}\label{e_B_k_off}
\frac{\ln2}{2}\xi e^{\frac{\ln2}{2}\xi}=\frac{\ln2}{2}\Gamma^{\frac{1}{2}},
\end{align}
where $e$ is the base of the natural logarithm. According to the definition and property of Lambert function \cite{J_Corless1996LambertW}, we have $\frac{\ln2}{2}\xi=W_0(\frac{\ln2}{2}\Gamma^{\frac{1}{2}})$, and finally we can express $B{_k^{\mathrm{off}*}}[n]$ as
\begin{align}\label{express_B_k_off}
B{_k^{\mathrm{off}*}}[n]=\frac{\frac{\ln2}{2}l{_{k}}[n]}{\delta W_0\big[\frac{\ln2}{2}(\frac{\phi{_{k,n}}}{w_k}h_{k}[n]l{_{k}}[n])^{\frac{1}{2}}\big]},~n\in\mathcal{N}_1.
\end{align}
Integrating with the cases $B{_k^{\mathrm{off}*}}[N-1]=B{_k^{\mathrm{off}*}}[N]=0$, the complete solution of $B{_k^{\mathrm{off}^*}}[n]$ in \eqref{B_k_off} can be obtained. The solution of $B{_{\mathrm{U},k}^{\mathrm{off}*}}[n]$ and $B{_{\mathrm{U},k}^{\mathrm{down}*}}[n]$ in \eqref{B_Uk_off} and \eqref{B_Uk_down} can be obtained in a similar way \cite{J_F.Zhou18Computation}.


\bibliographystyle{IEEEtran}
\bibliography{UAV_MEC_Relay}

\end{document}